\definecolor{BrickRed}{rgb}{0.8,0.25,0.33}
    \let\Cref\crtCref
    \let\cref\crtcref
\newtheorem{lemma}{Lemma}
\newtheorem{theorem}{Theorem}
\newtheorem*{theorem*}{Theorem}
\newtheorem{corollary}{Corollary}
\theoremstyle{definition}
\newtheorem{example}{Example}
\newcommand{\OPT}{\mathrm{OPT}}
\newcommand{\ALG}{\mathrm{ALG}}
\newcommand{\Ex}[2][]{\mbox{\rm\bf E}_{#1}\left[#2\right]}
\renewcommand{\Pr}[2][]{\mbox{\rm\bf Pr}_{#1}\left[#2\right]}
\newcommand{\e}{\mathrm{e}}
\newcommand{\growingmid}{\mathrel{}\middle|\mathrel{}}
\newcommand{\BSF}{\mathrm{BSF}}
\newcommand{\simoverhead}[1][]{\mathrel{\overset{\makebox[0pt]{\mbox{\normalfont\tiny\sffamily #1}}}{\sim}}}
\newcommand{\todo}[1]{}
\title{The Secretary Problem with Predicted Additive Gap}
\author{Alexander Braun\thanks{Institute of Computer Science, University of Bonn. Email: alexander.braun@uni-bonn.de } \qquad Sherry Sarkar\thanks{Carnegie Mellon University. Email: sherrys@andrew.cmu.edu}}
\begin{document}
\maketitle
\begin{abstract}
	The secretary problem is one of the fundamental problems in online decision making; a tight competitive ratio for this problem of $\nicefrac{1}{\e} \approx 0.368$ has been known since the 1960s. Much more recently, the study of algorithms with predictions was introduced: The algorithm is equipped with a (possibly erroneous) additional piece of information upfront which can be used to improve the algorithm's performance. 
Complementing previous work on secretary problems with prior knowledge, we tackle the following question: \\

\emph{What is the weakest piece of information that allows us to break the $\nicefrac{1}{\e}$ barrier?} \\

To this end, we introduce the secretary problem with predicted additive gap. As in the classical problem, weights are fixed by an adversary and elements appear in random order. In contrast to previous variants of predictions, our algorithm only has access to a much weaker piece of information: an \emph{additive gap} $c$. This gap is the difference between the highest and $k$-th highest weight in the sequence.
Unlike previous pieces of advice, knowing an exact additive gap does not make the problem trivial. Our contribution is twofold. First, we show that for any index $k$ and any gap $c$, we can obtain a competitive ratio of $0.4$ when knowing the exact gap (even if we do not know $k$), hence beating the prevalent bound for the classical problem by a constant. Second, a slightly modified version of our algorithm allows to prove standard robustness-consistency properties as well as improved guarantees when knowing a range for the error of the prediction.
\end{abstract}
 \newpage
	\section{Introduction}
\label{Section:Introduction}

The secretary problem is a fundamental problem in online decision making: An adversary fixes non-negative, real-valued weights $w_1 \geq w_2 \geq \dots \geq w_n$ which are revealed online in random order. The decision maker is allowed to accept (at most) one element. At the time of arrival of an element, the decision maker is required to make an immediate and irrevocable acceptance decision. The goal is to maximize the weight of the selected element. A tight guarantee\footnote{In the literature, there are two variants of the secretary problem: \emph{probability-maximization} (maximize the probability of selecting $w_1$) and \emph{value-maximization} (maximize selected weight). 
Throughout the paper, we consider the latter of the two variants.} of $\nicefrac{1}{\e}$ is known since the seminal work of \citet{Lindley1961DynamicPA} and \citet{Dyn63} (also see \citet{10.2307/2245639} or \citet{10.2307/1402748}) and can be achieved with a very simple threshold policy.

In the modern era, the assumption of having no prior information on the weights is highly pessimistic. To go beyond a worst case analysis, researchers have recently considered the setting where we have some sort of learned prediction that our algorithm may use up front. This setting spawned the recent and very successful field of algorithms with predictions. \citet{NEURIPS2020_5a378f84} and \citet{10.1145/3465456.3467623} studied the secretary problem with a prediction of the largest weight in the sequence, and resolve this setting with an algorithm which yields a nice robustness-consistency trade-off. \citet{FY23} consider the secretary problem with an even stronger prediction: A prediction for every weight in the sequence. 

However, predicting the largest element or weight can sometimes be difficult or unfavorable. For example in retail, past data may only contain information about prices and not the true values of buyers \citep{DBLP:conf/focs/KleinbergL03, DBLP:conf/soda/LemeSTW23}; or for data privacy reasons (see e.g.\ \citet{DBLP:conf/colt/AsiFKT23}), only surrogates for largest weights are revealed in history data. 
This motivates to advance our understanding of the following question: \\

\emph{What is the weakest piece of information we can predict that still allows us to break the $\nicefrac{1}{e}$ barrier?} \\

Stated another way, is there a different parameter we can predict, one that does not require us to learn the best value, but is still strong enough to beat $\nicefrac{1}{e}$?  
This brings us to the idea of predicting the \textit{gap} between the highest and $k$-th highest weight, or in other words, predicting how valuable $w_k$ is with respect to $w_1$. Coming back to data privacy for example, such a parameter does successfully anonymize the largest weight in the sequence.

From a theoretical perspective, for some special cases of gaps, previous work directly implies improved algorithms. For example, if we know $w_1$ and $w_2$ have the same weight, using an algorithm for the two-best secretary problem \citep{10.2307/2283044} directly leads to a better guarantee. More generally, if we know the \textit{multiplicative gap} $\nicefrac{w_1}{w_2}$, we observe that we can generalize the optimal algorithm for $w_1 = w_2$ (see e.g. \citet{10.2307/2283044, DBLP:journals/mor/BuchbinderJS14}). However, if we instead only know $\nicefrac{w_n}{w_1}= 0$, this does not help at all. The only insight is that $w_n = 0$; we have no insight on the range of values taken. This essentially boils down to the classical secretary problem and the best competitive ratio again is $\nicefrac{1}{\e}$. 
So while the multiplicative gap advises about the relative values without needing to know $w_1$ entirely, it is not strong enough in general to beat $\nicefrac{1}{\e}$. In this paper, we consider instead predicting an \textit{additive gap} $w_1 - w_k$.

The additive gap between $w_1$ and $w_k$ can be viewed as interpolating between two previously studied setups: when $w_1 - w_k$ gets small, we get closer towards the $k$-best secretary problem, and when $w_1 - w_k$ is very large, the additive gap acts as a surrogate prediction of $w_1$, the prediction setting in \citet{NEURIPS2020_5a378f84} and \citet{10.1145/3465456.3467623}. As we will see, even though the additive gap is much weaker than a direct prediction for $w_1$, it strikes the perfect middle ground: it is strong enough to beat $\nicefrac{1}{e}$ by a constant for any possible value of the gap $w_1 - w_k$ (and even if we do not know what $k$ is upfront). In addition, in contrast to pieces of advice studied in the literature so far, knowing an exact additive gap does not make the problem trivial to solve.

\subsection{Our Results and Techniques}
\label{subsection:resultstechniques}

Our contribution is threefold. First, in \Cref{Section:Generalgap}, we show the aforementioned result: knowing an exact additive gap allows us to beat the competitive ratio of $\nicefrac{1}{\e}$ by a constant.

\begin{theorem}[Theorem~\ref{Theorem:general_gap}, simplified form]
    There exists a deterministic online algorithm which achieves an expected weight of $\Ex[]{\ALG} \geq 0.4 \cdot w_1$ given access to a single additive gap $c_k$ for $c_k = w_1 - w_k$ and some $k$.
\end{theorem}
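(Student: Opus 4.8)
The approach I would take is a sampling phase followed by carefully chosen thresholds, leaning on two elementary facts about the gap. Having observed a set $S$ with maximum weight $M := \max_{i\in S} w_i$, we always have $w_1 \ge \max(M, c_k)$ --- since $M = w_j \le w_1$ for some $j$ and $w_1 = w_k + c_k \ge c_k$ --- and, whenever $S$ contains one of the $k$ largest elements, $M \ge w_k$ and hence $M + c_k \ge w_1$. The first fact makes $\max(M, c_k)$ a \emph{safe} threshold that $w_1$ always clears, so the algorithm is guaranteed to accept \emph{something} once $w_1$ is offered; the second makes $M + c_k$ a \emph{sharp} threshold --- when the sampled maximum sits at level $w_k$, it is cleared only by $w_1$ itself. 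A complementary fact is that the two largest weights differ by at most $c_k$: $w_j \ge w_1 - c_k$ for all $j \le k$, so when $c_k$ is a small fraction of $w_1$ it suffices to grab any of the top $k$.

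Concretely, I would analyze an algorithm that rejects a sampling prefix of length $\lambda$, records its maximum $M_1$, then over an intermediate phase accepts the first element clearing the sharp threshold $M_1 + c_k$, and finally, if nothing was accepted, over the last phase accepts the first element clearing the safe threshold $\max(M_2, c_k)$, with $M_2$ the running maximum of everything seen. The analysis then splits into cases --- a split that lives only in the analysis, since the algorithm knows neither $k$ nor $w_1$. If $c_k$ is a large fraction of $w_1$, every threshold used is at least $c_k$, so whatever is accepted is already worth at least $c_k$, and one checks the algorithm accepts something with probability at least $1-\lambda$. If $c_k$ is a small fraction of $w_1$, then $w_1, \dots, w_k$ are all near-optimal, and one argues the sharp threshold $M_1 + c_k$ either equals $w_1$ --- which happens with constant probability exactly when few weights separate $w_1$ from $w_k$, e.g.\ when a block of weights sits at level $w_k$ --- so the algorithm accepts $w_1$; or never triggers, in which case the safe phase takes over with a threshold that is close to $w_1$. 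Tuning $\lambda$ and the two phase boundaries so that every configuration clears the same bound is what yields the constant $0.4$, the binding configurations being the intermediate ones where $c_k$ is comparable to $w_1$.

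The step I expect to be the main obstacle is precisely this design and balancing. A single deterministic $k$-oblivious algorithm must act like a prediction-of-the-optimum algorithm (trusting $c_k$) when the gap is large and like a top-$k$-secretary algorithm when it is small, and the sharp threshold $M_1 + c_k$ cuts both ways: when $M_1$ exceeds $w_k$ it also exceeds $w_1$, so nothing clears it and $w_1$ may slip through the intermediate phase unaccepted and then be wasted as it raises the running maximum; while when $M_1$ is far below $w_k$ the same threshold drops almost to $c_k$ and can admit mid-range weights. Pinning down the phase boundaries --- and perhaps a floor on the thresholds, or a safe threshold running alongside the sharp one --- so that both failure modes are controlled, and then pushing the worst-case guarantee from $\nicefrac{1}{\e}$ up to $0.4$, is where the work concentrates; the remaining steps, conditioning on which ranks land in the sampling prefix and summing the resulting harmonic-type expressions, are routine.
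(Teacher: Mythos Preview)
Your instincts are right---split on whether $c_k$ is a large or small fraction of $w_1$, and anchor on the safe threshold $\max(M,c_k)$ that $w_1$ always clears---but the three-phase design built around the sharp threshold $M_1+c_k$ is a detour, and the analysis sketch contains an error that you yourself flag a paragraph later. You claim that in the large-gap case the algorithm accepts something with probability at least $1-\lambda$; but when $M_1>w_k$ (near-certain if, say, $k=n$ and the sampling phase has positive length), the sharp threshold exceeds $w_1$, so if $w_1$ lands in the intermediate phase it is rejected there and then poisons the running maximum $M_2$, blocking everything in the final phase. The acceptance probability is therefore at most roughly $1-\lambda-\mu$ (with $\mu$ the intermediate-phase length), not $1-\lambda$. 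And your proposed patch of running the safe threshold alongside the sharp one makes the sharp one irrelevant, since $\max(M_1,c_k)\le M_1+c_k$ always.

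The paper simply drops the sharp threshold: wait until time $\tau$, then accept the first $w_i\ge\max(\BSF(\tau),c_k)$. Because $w_1$ always clears this, the algorithm is guaranteed to accept whenever $w_1$ arrives after $\tau$; in the large-gap case ($w_k<\tfrac12 w_1$, so anything accepted is worth at least $c_k>\tfrac12 w_1$) this already gives $\Ex{\ALG}\ge(1-\tau)\cdot\tfrac{k}{2(k-1)}\cdot w_1$. In the small-gap case ($w_k\ge\tfrac12 w_1$) the gap term never excludes $w_1,\dots,w_k$, so selection probabilities are bounded below by those of the plain best-so-far rule; the paper then records two lower bounds---one by conditioning on the rank of $\BSF(\tau)$, and one by tracking only $w_1$ and $w_2$ through the standard $\tau\ln(1/\tau)$ integrals---and takes their maximum. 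Setting $\tau=0.2$ (independent of $k$) makes the worse of the two cases at least $0.4$. No third phase, no sharp threshold, and no delicate balancing of ``binding configurations'': the clean split at $w_k=\tfrac12 w_1$ does all the work.
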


Still, getting an exact gap might be too much to expect. Hence, in \Cref{section:robustness-consistency}, we introduce a slight modification in the algorithm to make it robust with respect to errors in the predicted gap while simultaneously outperforming the prevalent competitive ratio of $\nicefrac{1}{\e}$ by a constant for accurate gaps.

\begin{theorem}[Theorem~\ref{thm:RC}, simplified form] \label{thm:RC_simplified}
    There exists a deterministic online algorithm which uses a predicted additive gap and is simultaneously $(\nicefrac{1}{\e} + O(1))$-consistent and $O(1)$-robust.
\end{theorem}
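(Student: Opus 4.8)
The plan is to modify the exact-gap algorithm from \Cref{Theorem:general_gap} so that it interpolates between two regimes depending on a trust parameter. Let $\hat{c}$ denote the predicted gap and let the true gap be $c_k = w_1 - w_k$ for the (unknown) index $k$. The starting point is the observation already exploited in the exact-gap case: after observing a sampling phase of the first $\alpha n$ elements, we know the largest sampled weight $s$, and the combination of $s$ with the predicted gap $\hat c$ gives us an estimate $s + \hat c$ of $w_1$. When $\hat c$ is accurate, this is a good proxy for $w_1$ and we can run a single-threshold policy against (a constant fraction of) $s + \hat c$, in the spirit of the largest-weight-prediction algorithms of \citet{NEURIPS2020_5a378f84} and \citet{10.1145/3465456.3467623}. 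When $\hat c$ is badly off, the estimate $s+\hat c$ is useless and we must fall back on a classical secretary strategy to retain an $O(1)$ guarantee.

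First I would set up the algorithm as a randomized (or parameter-mixed) combination: with probability $p$ run the ``trusting'' threshold policy that accepts the first element exceeding $\beta \cdot (s + \hat c)$ for a suitable constant $\beta \in (0,1)$, and with probability $1-p$ run the plain $\nicefrac{1}{\e}$-secretary algorithm; a deterministic version can be obtained by splitting the arrival stream into two independent-looking sub-instances via the random-order property, exactly as in the cited works. Second, for the \emph{consistency} bound I would condition on the sampling phase behaving typically — namely that the element realizing $w_1$ has not yet arrived but at least one of $w_1,\dots,w_k$ has, so that $s \in [w_k, w_1)$ and hence $s + \hat c \in [w_1, w_1 + (\text{something}))$ when $\hat c = c_k$; a constant-probability event under random order for the relevant range of $k$. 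On this event the trusting threshold is $\beta w_1$ up to lower-order terms, and the first super-threshold arrival has weight at least $\beta w_1$, yielding expected value $\geq (1/\e + \Omega(1)) w_1$ after optimizing $\alpha, \beta, p$ — here the additive-gap structure is what saves us even when $k$ is large, because $s$ is squeezed between $w_k$ and $w_1$. Third, for \emph{robustness}, I would simply note that the $1-p$ branch alone guarantees $(1-p)/\e \cdot w_1$ regardless of $\hat c$, which is $\Omega(1)$; one then checks the trusting branch never actively \emph{hurts} (it only occupies its own probability mass / sub-stream), so the overall ratio stays bounded below by a constant.

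The main obstacle I anticipate is the case analysis over the unknown index $k$ in the consistency argument: the ``squeezing'' event $s \in [w_k, w_1)$ has probability depending on $k$ and on $\alpha$, and it degrades as $k$ grows (for $k$ close to $n$ it becomes likely that many of the top-$k$ weights land in the sample, which is fine, but one must ensure $s$ does not overshoot by landing exactly on $w_1$ — handled by conditioning on $w_1$ arriving late). Getting a \emph{single} pair of constants $(\alpha,\beta,p)$ that works uniformly over all $k$ is the delicate part, and I expect the proof to route this through the same potential/threshold bookkeeping as \Cref{Theorem:general_gap} rather than re-deriving it; the error-tolerance refinement (improved guarantees when a range for the prediction error is known) then follows by replacing the point estimate $\hat c$ with the conservative endpoint of the known error interval and repeating the analysis. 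A secondary nuisance is derandomization: converting the probabilistic mixture into a deterministic algorithm while preserving both bounds requires the standard trick of using the random arrival order itself as the source of randomness, which is routine but must be stated carefully.
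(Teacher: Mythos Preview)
Your proposal diverges from the paper in two structural ways, and the first of them creates a genuine gap. The paper's threshold is $\max(\BSF(\tau),\hat c)$, \emph{not} $\beta\cdot(s+\hat c)$. Your premise that $s+\hat c$ ``estimates $w_1$'' is off: with $\hat c=c_k=w_1-w_k$ and $s\in[w_k,w_1)$ one has $s+\hat c\in[w_1,\,2w_1-w_k)$, so $s+\hat c$ systematically \emph{overshoots} $w_1$ by the random amount $s-w_k$. You compensate by scaling with $\beta\le \tfrac12$, but then the only pointwise guarantee on the accepted weight is $\ge \beta w_1$, and the good event ``$w_1$ late and some $w_j$ with $j\le k$ in the sample'' has probability at most $(1-\alpha)\bigl(1-(1-\alpha)^{k-1}\bigr)$; for $k=2$ this is at most $\tfrac14$, giving the trusting branch at best $\tfrac{w_1}{8}$, well below $\nicefrac{1}{\e}$. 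Since a $p$-mixture with the classical algorithm yields $\tfrac{1}{\e}+p(r-\tfrac{1}{\e})$, you need the trusting branch alone to beat $\nicefrac{1}{\e}$, and your sketch does not establish this. Deferring to ``the same bookkeeping as \Cref{Theorem:general_gap}'' does not help, because that theorem analyses the threshold $\max(\BSF(\tau),c)$, whose behaviour is qualitatively different (the gap acts as a \emph{floor} that never exceeds $w_1$, rather than as an additive shift of the sample maximum).

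The second divergence is the hedging mechanism. The paper does not randomize or split the stream at all: \Cref{alg:RC} is deterministic from the outset. It runs the exact-gap threshold $\max(\BSF(\tau),\hat c)$ on the interval $[\tau,1-\gamma]$ and then \emph{drops the gap term}, reverting to the pure $\BSF(\tau)$ threshold on $[1-\gamma,1]$. Robustness is then a one-line calculation: even if $\hat c>w_1$, the probability of picking $w_1$ in the final window is at least $\int_{1-\gamma}^{1}\tfrac{\tau}{x}\,dx=\tau\ln\!\bigl(\tfrac{1}{1-\gamma}\bigr)$. Consistency reuses the case split $w_k\lessgtr\tfrac12 w_1$ from \Cref{lemma:proof_case_1}--\Cref{lemma:proof_case_2}; the $w_k\ge\tfrac12 w_1$ case carries over verbatim (dropping $\hat c$ can only help), and the $w_k<\tfrac12 w_1$ case needs a short extra argument for what happens on $[1-\gamma,1]$. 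This avoids your derandomization step entirely and sidesteps the worry that ``the trusting branch never actively hurts'' --- in a single-selection problem the two branches would compete for the one acceptance, so that claim is not innocuous.
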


The previous \Cref{thm:RC_simplified} does not assume any bounds on the error of the predicted additive gap used by our algorithm. In particular, the error of the prediction might be unbounded and our algorithm is still constant competitive. 
However, if we know that the error is bounded, we can do much better. 

\begin{theorem}[Theorem~\ref{Theorem:approx_gap}, simplified form]
    There exists a deterministic online algorithm which achieves an expected weight of $\Ex[]{\ALG} \geq 0.4 \cdot w_1 - 2 \epsilon$ given access to a bound $\epsilon$ on the error of the predicted gap.
\end{theorem}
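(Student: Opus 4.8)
The plan is to reduce to the exact‑gap algorithm of \Cref{Theorem:general_gap} and to show that every quantity its analysis depends on is stable under an additive perturbation of size $\epsilon$ in the input gap. Write $\hat c$ for the predicted gap, so that the true gap $c_k = w_1 - w_k$ (for the relevant, possibly unknown, index $k$) satisfies $|\hat c - c_k| \le \epsilon$, i.e.\ $c_k \in [\hat c - \epsilon,\, \hat c + \epsilon]$. We may assume $\epsilon < 0.2\, w_1$, since otherwise $0.4\, w_1 - 2\epsilon \le 0$ and the bound is vacuous (any algorithm collects a non‑negative weight). The algorithm we analyze is the one from \Cref{Theorem:general_gap}, run verbatim but fed the shifted estimate $\tilde c := \hat c + \epsilon$ in place of a true gap; the purpose of the shift — which we can afford precisely because $\epsilon$ is handed to the algorithm — is to land on the ``safe'' side of $c_k$, as explained below.

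First I would revisit the analysis behind \Cref{Theorem:general_gap}. Its $0.4$ guarantee is obtained by isolating a collection of ``good'' events which depend only on \emph{which} elements land in \emph{which} of the algorithm's time windows — equivalently, on the random permutation — and not on the numerical weights nor on the value of the input gap; on these events the algorithm is guaranteed to accept some element whose weight is bounded below by an expression that is affine, with coefficient one, in the input gap (for instance $w_1$ itself, or $(\text{current running maximum}) - (\text{input gap})$, depending on which regime applies). Replacing $c_k$ by $\tilde c$ therefore leaves both the good events and their probabilities untouched and changes only the weight certified to be collected on them. Since the input gap appears in that certified weight with coefficient exactly one and $|\tilde c - c_k| = |\hat c + \epsilon - c_k| \le 2\epsilon$, the certified weight drops by at most $2\epsilon$ compared with the exact‑gap run. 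Summing $\Pr[E_i]\cdot(\text{value}_i - 2\epsilon)$ over the case decomposition and invoking \Cref{Theorem:general_gap} then gives $\Ex[]{\ALG} \ge 0.4\, w_1 - 2\epsilon$, and the argument never refers to $k$, matching the hypothesis that $k$ is unknown.

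The step that requires genuine care — and what I expect to be the main obstacle — is that a perturbed threshold could conceivably stop the algorithm from ever accepting, or could break the structural sub‑claims (``by now we have already observed one of the top $k$ weights'', ``the first element exceeding the current threshold in this window is $w_1$'', and the like) on which the good events rest. This is exactly why the estimate is shifted to $\tilde c = \hat c + \epsilon \ge c_k$ rather than left at $\hat c$: the thresholds of the algorithm should enter the gap with a negative sign — roughly ``accept the next element of weight at least $(\text{observed quantity}) - (\text{gap})$'' — so enlarging the gap only makes acceptance \emph{easier} and the commitment events \emph{more} likely, while the weight actually accepted is still at least $(\text{observed quantity}) - \tilde c$, i.e.\ at least the exact‑gap guarantee minus $2\epsilon$. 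It then remains to check that each place where the proof of \Cref{Theorem:general_gap} uses the identity $c_k = w_1 - w_k$ in fact needs only an \emph{upper bound} on $c_k$ (e.g.\ the inequality $w_1 \le (\text{max observed so far among top-}k\text{ elements}) + c_k$, which stays valid with $\tilde c$ in place of $c_k$); carrying the resulting $\pm\epsilon$ slack through the otherwise unchanged case analysis yields the claim. If some relevant threshold turned out to depend on the gap with the opposite sign, the identical plan goes through with $\tilde c := \hat c - \epsilon$, and the error budget is the same.
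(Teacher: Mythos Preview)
Your proposal rests on a mistaken picture of the exact-gap algorithm. In \Cref{Algorithm:SAG_general} the threshold is $\max(\BSF(\tau),\,c)$: the gap is used \emph{directly} as a floor, not subtracted from an observed quantity. Consequently enlarging the gap makes acceptance \emph{harder}, not easier. Your primary shift $\tilde c = \hat c + \epsilon$ can push the threshold above $w_1$ and cause the algorithm to accept nothing at all; the paper gives exactly this counterexample just before \Cref{section:robustness-consistency}. The correct direction is your throwaway fallback $\hat c - \epsilon$, which is precisely what the paper does in \Cref{Algorithm:SAG_approx}.

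Even with the right shift, the ``same good events, certified weight drops by $2\epsilon$'' reduction does not go through as stated. The good events in the proof of \Cref{Theorem:general_gap} are \emph{not} functions of arrival times alone: in \Cref{lemma:proof_case_1} the set of acceptable elements (indices $1,\dots,l$) is determined by which weights clear the gap, and in \Cref{lemma:proof_case_2} one uses that the gap does not exclude $w_2,\dots,w_k$. Both inequalities $c_k > \tfrac{1}{2}w_1$ (Case~1) and $c_k \le w_k$ (Case~2) are used, so neither a one-sided upper nor a one-sided lower bound on $c_k$ suffices. The paper's actual argument accommodates the two-sided slack by shifting the case boundary to $w_k \lessgtr \tfrac{1}{2}w_1 - 2\epsilon$: Case~1 then still guarantees the threshold exceeds $\tfrac{1}{2}w_1$ with no loss, while Case~2 only certifies $w_k \ge \tfrac{1}{2}w_1 - 2\epsilon$, which is where the additive $-2\epsilon$ enters. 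A further wrinkle you would have to handle is that under the shifted boundary the threshold $\tilde c - \epsilon$ can now exceed $w_2$ even in Case~2, so the paper inserts an extra sub-case (if all of $w_2,\dots,w_n$ are blocked, the algorithm simply picks $w_1$ whenever it arrives after $\tau$). None of this is captured by ``affine with coefficient one in the input gap''.
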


Our algorithms are inspired by the one for classical secretary, but additionally incorporate the gap: Wait for some time to get a flavor for the weights in the sequence, set a threshold based on the past observations and the gap, pick the first element exceeding the threshold. 

At first glance, this might not sound promising: In cases when the gap is small, incorporating the gap in the threshold does not really affect the best-so-far term. Hence, it may seem that beating $\nicefrac{1}{\e}$ is still hard. However, in these cases, even though the threshold will be dominated by the best-so-far term most of the time, the gap reveals the information that the best value and all other values up to $w_k$ are not too far off. That is, even accepting a weight which is at least $w_k$ ensures a sufficient contribution.

Our analyses use this fact in a foundational way: Either the gap is large in which case we do not consider many elements in the sequence for acceptance at all. Or the gap is small which implies that accepting any of the $k$ highest elements is reasonably good. For each of the cases we derive lower bounds on the weight achieved by the algorithm. 

Since we do not know upfront which case the instance belongs to, we optimize our initial waiting time for the worse of the two cases. In other words, the waiting time cannot be tailored to the respective case but rather needs to be able to deal with both cases simultaneously. This introduces some sort of tension: For instances which have a large gap, we would like the waiting time to be small. By this, we could minimize the loss which we incur by waiting instead of accepting high weighted elements at the beginning of the sequence. For instances which have a small gap, the contribution of the gap to the algorithm's threshold can be negligible. This results in the need of a longer waiting time at the beginning to learn the magnitude of weights reasonably well. We solve this issue by using a waiting time which balances between these two extremes: It is (for most cases) shorter than the waiting time of $\nicefrac{1}{\e}$ from the classical secretary algorithm. Still, it is large enough to gain some information on the instance with reasonable probability.

As a corollary of our main theorem, we show that we can beat the competitive ratio of $\nicefrac{1}{\e}$ even if we only know the gap $w_1 - w_k$ but do not get to know the index $k$. In particular, this proves that even an information like \qq{There is a gap of $c$ in the instance} is helpful to beat $\nicefrac{1}{\e}$, no matter which weights are in the sequence and which value $c$ attains.

Complementing theoretical results, we run simulations in \Cref{Section:Simulations} which strengthen our theoretical findings. First, we show that for instances in which the classical secretary algorithm achieves a nearly tight guarantee of $\nicefrac{1}{\e}$, our algorithm can almost always select the highest weight. 
In addition, we further investigate the robustness-consistency trade-off of our algorithm. In particular, as it will turn out, underestimating is not as much of an issue as overestimating the exact gap. 
    \subsection{Additional Related Work}
\label{Subsection:related_work}

\emph{Implications of related results on the additive gap.} In the two-best secretary problem, we can pick at most one element but win when selecting either the best or second best element. For this problem, the competitive ratio is upper bounded by approximately $0.5736$ \citep{DBLP:journals/mor/BuchbinderJS14, DBLP:conf/soda/ChanCJ15} (the authors provide an algorithm which matches this bound, so the guarantee is tight). As our setting with $w_1 - w_2 = 0$ can be viewed as a special case, this yields a hardness result; the best any algorithm can perform with the exact additive gap provided upfront is approximately $0.5736$.

\emph{A non-exhaustive list of related work on secretary problems.} Since the introduction of the secretary problem in the 1960s, there have been a lot of extensions and generalizations of this problem with beautiful algorithmic ideas to solve them \citep{10.5555/1070432.1070519, DBLP:conf/approx/BabaioffIKK07, 10.1145/3212512, DBLP:journals/mor/FeldmanSZ18, 10.1007/978-3-642-02930-1_42, 10.1145/1993636.1993716, doi:10.1137/15M1033708, 10.1145/2897518.2897540}.
Beyond classical setups, recent work by \citet{doi:10.1137/1.9781611975994.128} and \citet{DBLP:conf/soda/CorreaCFOT21} studies the secretary problem with sampled information upfront. Here, some elements are revealed upfront to the algorithm which then tries to pick the best of the remaining weights. Guarantees are achieved with respect to the best remaining element in the sequence. 
In addition, there are also papers bridging between the secretary problem and the prophet inequality world, e.g. \citet{DBLP:journals/corr/abs-2011-06516} or \citet{10.1145/3328526.3329627} and many more \citep{bradac_et_al:LIPIcs:2020:11717, kesselheim_et_al:LIPIcs:2020:12479, doi:10.1137/1.9781611977073.53}.

\emph{Algorithms with machine learned advice.} In the introduction, we already scratched the surface of the field on algorithms with predictions. Here, the algorithm has access to some machine learned advice upfront and may use this information to adapt decisions. Initiated by the work of \citet{DBLP:journals/jacm/LykourisV21} and \citet{DBLP:conf/nips/PurohitSK18}, there have been many new and interesting results in completely classical problems within the last years, including ski rental \citep{WZ20}, online bipartite matching \citep{LMRX21}, load balancing \citep{SHVB23}, and many more (see e.g. \citet{NEURIPS2021_161c5c5a, Zeynali_Sun_Hajiesmaili_Wierman_2021, NEURIPS2021_25047349}). 
Since this area is developing very fast, we refer the reader to the excellent website \citet{AlgosWithPredicitons} for references of literature. 

As mentioned before, also the secretary problem itself has been studied in this framework. \citet{NEURIPS2020_5a378f84} consider the secretary problem when the machine learned advice predicts the weight of the largest element $w_1$. Their algorithm's performance depends on the error of the prediction as well as some confidence parameter by how much the decision maker trusts the advice. In complementary work, \citet{10.1145/3465456.3467623} give a bigger picture for secretary problems with machine learned advice. Their approach is LP based and can capture a variety of settings. They assume that the prediction is one variable for each weight (e.g. a 0/1-variable indicating if the current element is the best overall or not). \citet{FY23} assume an even stronger prediction: Their algorithm is given a prediction for every weight in the sequence.
In contrast, we go into the opposite direction and deal with a less informative piece of information.

Our work also fits into the body of literature studying weak prediction models, previously studied for e.g.\ paging \citep{DBLP:conf/icml/0001B0FHL0S23}, online search  \citep{DBLP:conf/innovations/000121}, to just mention a few. In these, several different directions for weak prediction models were considered. For example, the setting in scheduling or caching where the number of predictions is smaller than the input size \citep{DBLP:conf/icml/Im0PP22, DBLP:conf/icml/BenomarP24}.

	\section{Preliminaries}
\label{Section:Preliminaries}

In the secretary problem, an adversary fixes $n$ non-negative, real-valued weights, denoted $w_1 \geq w_2 \geq \dots \geq w_n$. For each of the elements, there is an \emph{arrival time}\footnote{Note that this setting is equivalent to drawing a random permutation of the $n$ elements and revealing elements in this order one by one.} $t_i \simoverhead[\textnormal{iid}] \textnormal{Unif}[0,1]$. Weight $w_i$ is revealed at time $t_i$ and we immediately and irrevocably need to decide if we want to accept or reject this element. Overall, we are allowed to accept at most one element with the objective of maximizing the selected weight. We say that an algorithm is \emph{$\alpha$-competitive} or achieves a \emph{competitive ratio} of $\alpha$ if $\Ex[]{\ALG} \geq \alpha \cdot w_1 = \alpha \cdot \max_i w_i
$, where the expectation is taken over the random arrival times of elements (and possible internal randomness of the algorithm). \\

In addition to the random arrival order, we assume to have access to a single prediction $\hat{c}_k$ for one \emph{additive gap} together with its index $k$. The additive gap for some index $2 \leq k \leq n$ is $c_k \coloneqq w_1 - w_k$. We say that an algorithm has access to an \emph{exact} or \emph{accurate} gap if $\hat{c}_k = c_k$ (as in \Cref{Section:Generalgap}). 
When the algorithm gets a predicted additive gap $\hat{c}_k$ which might not be accurate (as in \Cref{section:robustness-consistency} or \Cref{Section:Approxgap}), we say that $\hat{c}_k$ has \emph{error} $\eta = |\hat{c}_k - c_k|$. We call an algorithm \emph{$\rho$-robust} if the algorithm is $\rho$-competitive regardless of error $\eta$ and we say the algorithm is \emph{$\psi$-consistent} if the algorithm is $\psi$-competitive when $\eta = 0$. 
To fix notation, for any time $\tau \in [0,1]$, we denote by $\BSF(\tau)$ (read  \emph{best-so-far} at time $\tau$) the highest weight which did appear up to time $\tau$. In other words, $\BSF(\tau) = \max_{i : t_i \leq \tau} w_i$. Also, when clear from the context, we drop the index $k$ at the gap and only call it $c$ or $\hat{c}$ respectively.

    \section{Knowing an Exact Gap}
\label{Section:Generalgap}

Before diving into the cases where the predicted gap may be inaccurate in \Cref{section:robustness-consistency} and \Cref{Section:Approxgap}, we start with the setup of getting a precise prediction for the gap. That is, we are given the exact gap $c_k = w_1 - w_k$ for some $2 \leq k \leq n$. We assume that we get to know the index $k$ as well as the value of $c_k$, but neither $w_1$ nor $w_k$. 

Our algorithm takes as input the gap $c$ as well as a waiting time $\tau$. 
This gives us the freedom to potentially choose $\tau$ independent of $k$ if required. As a consequence, we could make the algorithm oblivious to the index $k$ of the element to which the gap is revealed. We will use this in Corollary~\ref{corollary:unknown_k}. 

\begin{algorithm}[h]	
\caption{Secretary with Exact Additive Gap}
\label{Algorithm:SAG_general}
\begin{algorithmic}
    \STATE \textbf{Input:} Additive gap $c$, time $\tau \in [0,1]$ \\
    \STATE Before time $\tau$: \hspace*{3mm} Observe weights $w_i$ \\
    \STATE At time $\tau$: \hspace*{9.5mm} Compute $\BSF(\tau) = \max_{i : t_i \leq \tau} w_i$ \\
    \STATE After time $\tau$: \hspace*{5mm} Accept first element with $w_i \geq \max( \BSF(\tau), c )$
\end{algorithmic}
\end{algorithm}	

This algorithm beats the prevalent competitive ratio of $\nicefrac{1}{\e} \approx 0.368$ by a constant.

\begin{theorem} \label{Theorem:general_gap}
    Given any additive gap $c_k = w_1 - w_k$, for $\tau = 1 - \left(\nicefrac{1}{k+1}\right)^{\nicefrac{1}{k}}$, Algorithm~\ref{Algorithm:SAG_general} achieves a competitive ratio of $\max \left( 0.4, \nicefrac{1}{2} \left(\nicefrac{1}{k+1}\right)^{\nicefrac{1}{k}}  \right)$. 
\end{theorem}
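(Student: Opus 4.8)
Throughout write $a := 1-\tau = \left(\tfrac{1}{k+1}\right)^{1/k}$, so that the prescribed waiting time is the unique one satisfying $a^{k} = \tfrac{1}{k+1}$; equivalently, the probability that \emph{none} of the $k$ heaviest weights $w_1,\dots,w_k$ has arrived by time $\tau$ is exactly $\tfrac{1}{k+1}$. This identity is the hinge of both halves of the argument. I would split the analysis according to whether $c\ge\tfrac12 w_1$ (``large gap'') or $c<\tfrac12 w_1$ (``small gap''), and prove $\Ex[]{\ALG}\ge\max\!\big(0.4,\tfrac12 a\big)w_1$ in each case.

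\textbf{Large gap, $c\ge\tfrac12 w_1$.} Here $w_k = w_1-c\le\tfrac12 w_1\le c$, so at most $k-1$ of the weights are $\ge c$; since the algorithm only ever accepts a weight that is $\ge c$, any element it takes is worth at least $c\ge\tfrac12 w_1$, and only the $\le k-2$ elements other than $w_1$ that exceed $c$ can ever ``block'' $w_1$. I would first observe that $w_1$ always clears the threshold, so the algorithm accepts \emph{something} whenever $t_1>\tau$, hence with probability $\ge a$, which already gives $\Ex[]{\ALG}\ge c\cdot a\ge\tfrac12 a\,w_1 = \tfrac12\left(\tfrac1{k+1}\right)^{1/k}w_1$, the second term of the claimed maximum. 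Next, the classical ``no blocker lands in the window $(\tau,t_1)$'' computation gives $\Pr{\ALG=w_1}\ge\int_0^{a}(1-u)^{k-2}\,du=\tfrac{1-(1-a)^{k-1}}{k-1}$. Combining $\Ex[]{\ALG}\ge w_1\Pr{\ALG=w_1}+c\big(\Pr{\text{accept}}-\Pr{\ALG=w_1}\big)$ with $c\ge\tfrac12 w_1$ and $\Pr{\text{accept}}\ge a$ yields $\Ex[]{\ALG}\ge\tfrac12 w_1\big(a+\Pr{\ALG=w_1}\big)$, and it remains to check the scalar inequality $a+\tfrac{1-(1-a)^{k-1}}{k-1}\ge 0.8$ for every integer $k\ge2$ (with $a=(1/(k+1))^{1/k}$), which holds with room to spare.

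\textbf{Small gap, $c<\tfrac12 w_1$.} Now $w_1,\dots,w_k\in[w_1-c,\,w_1]\subseteq(\tfrac12 w_1,\,w_1]$, so each of the $k$ heaviest elements is a good surrogate for $w_1$. The dichotomy I would use is: \emph{either} at least one of $w_1,\dots,w_k$ arrives before $\tau$ --- then $\BSF(\tau)\ge w_k$, the threshold the algorithm sets is large enough that afterwards it can accept only a weight which is a fixed fraction of $w_1$ (and is $\ge w_1-c$ if it is one of the top $k$) --- \emph{or} all of $w_1,\dots,w_k$ are delayed past $\tau$, which happens with probability exactly $a^{k}=\tfrac1{k+1}$. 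On the ``good'' event intersected with $\{t_1>\tau\}$ (probability $a-a^{k}$) I would lower-bound $\ALG$ by a constant fraction of $w_1$, separating off the sub-event $\{\ALG=w_1\}$ whose probability is at least the classical secretary value $-\tau\ln\tau$; on the ``bad'' event I would use only $\ALG\ge c$. Writing $c=\beta w_1$ with $\beta\in[0,\tfrac12)$, the per-event contributions assemble into a lower bound that I would then minimise over $\beta$; substituting $\tau=1-(1/(k+1))^{1/k}$ should close the case, once more using that the delay probability equals $\tfrac1{k+1}$.

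The main obstacle is the small-gap case, and within it the probability-$\tfrac1{k+1}$ event that $w_1,\dots,w_k$ all arrive after $\tau$: there $\BSF(\tau)$ can be tiny, a low-weight element can be accepted before any heavy one, and one must argue that the resulting loss is affordable. This is precisely the ``tension'' the paper describes --- a large $\tau$ is needed to learn the scale of the weights when the gap barely moves the threshold, yet a large $\tau$ wastes heavy early arrivals --- and it is what forces the specific form $\tau = 1-(1/(k+1))^{1/k}$, the waiting time for which the delay probability $a^{k}$ is exactly balanced against the loss from waiting. The remaining effort is careful bookkeeping: aggregating the per-event bounds in the small-gap case so that, after minimising over the worst gap and over $k$, the guaranteed constant lands at $0.4$ rather than merely around $0.37$; and verifying the two elementary scalar inequalities (one per case) for all integers $k\ge2$.
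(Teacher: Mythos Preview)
Your large-gap case is correct and essentially matches the paper's Lemma~\ref{lemma:proof_case_1}: conditioned on $t_1>\tau$ the algorithm must accept something worth at least $c\ge\tfrac12 w_1$, and your refinement via a direct lower bound on $\Pr{\ALG=w_1}$ is a mild sharpening of the paper's cruder $\tfrac{1}{l}$ argument. The scalar inequality you need there does hold for all $k\ge 2$.

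The small-gap case, however, has a genuine gap. Your good/bad dichotomy yields, at the minimising value $\beta=0$,
\[
\Ex{\ALG}\;\ge\;\tfrac12\, w_1\Bigl[(a-a^{k})+\Pr{\ALG=w_1}\Bigr]\;\ge\;\tfrac12\, w_1\Bigl[\bigl(a-\tfrac{1}{k+1}\bigr)+\tau\ln\tfrac{1}{\tau}\Bigr],
\]
and this falls short of $0.4\,w_1$ for $k\in\{2,3\}$: numerically it gives roughly $0.30$ for $k=2$ and $0.37$ for $k=3$. The difficulty is exactly the one you flag but do not resolve. On the bad event (probability $\tfrac{1}{k+1}$, which is $\tfrac13$ when $k=2$) the threshold can be arbitrarily small when $c\approx 0$, and the adversary can force the algorithm to accept a near-worthless element there (take $w_{k+1}=\dots=w_n=0$ and $n$ large). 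The single estimate $\Pr{\ALG=w_1}\ge\tau\ln(1/\tau)$ does not extract enough from the bad event to compensate, and no further bookkeeping on your stated events will: for $k=2$ your ``good'' event is already contained in $\{\ALG=w_1\}$, so the two pieces you assemble collapse to just $w_1\cdot\Pr{\ALG=w_1}$.

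The paper closes this hole with a second, complementary bound (Lemma~\ref{lemma:proof_case_2}(ii)): since $w_2\ge w_k\ge c$ in the small-gap regime, the classical secretary argument also lower-bounds $\Pr{\text{select }w_2}\ge\tau\ln(1/\tau)-\tau(1-\tau)$, whence $\Ex{\ALG}\ge\bigl(\tfrac32\tau\ln(1/\tau)-\tfrac12\tau(1-\tau)\bigr)w_1$. This exceeds $0.4$ for the prescribed $\tau$ whenever $k\le 11$, while the bound you (and the paper's Lemma~\ref{lemma:proof_case_2}(i)) obtain handles $k\ge 12$. Your decomposition does recover the $\tfrac12\,a=\tfrac12(1/(k+1))^{1/k}$ half of the claimed maximum, but the missing $w_2$-selection bound is what pushes the constant up to $0.4$ for small $k$.
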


Note that as $k$ tends towards $n$ and both become large, the competitive ratio approaches $\nicefrac{1}{2}$. 

We split the proof of Theorem~\ref{Theorem:general_gap} in the following two lemmas. Each of them gives a suitable bound on the performance of our algorithm for general waiting times $\tau$ in settings when $w_k$ is small or large.

The first lemma gives a lower bound in cases when $w_k$ is small in comparison to $w_1$.

\begin{lemma}\label{lemma:proof_case_1}
    If $w_k < \frac{1}{2} w_1$, then $\Ex[]{\ALG} \geq (1-\tau) \left( \frac{1}{2} + \frac{1}{2(k-1)} \right) \cdot w_1$.
\end{lemma}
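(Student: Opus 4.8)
The plan is to leverage the hypothesis $w_k < \tfrac12 w_1$ through one structural consequence: the gap is large, $c = w_1 - w_k > \tfrac12 w_1$. Since Algorithm~\ref{Algorithm:SAG_general} only ever accepts an element of weight at least $\max(\BSF(\tau),c) \ge c$, \emph{every} element it could possibly select is worth more than $\tfrac12 w_1$. So the argument splits into two parts: (i) show the algorithm accepts \emph{something} whenever $w_1$ arrives after $\tau$, which already buys roughly $c \cdot (1-\tau)$; and (ii) show that with noticeable probability it accepts $w_1$ itself, which is worth $w_1 = c + w_k$ rather than just $c$ — and this surplus $w_k$ is exactly what will produce the extra $\tfrac1{2(k-1)}$ term.

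For (i) I would condition on the arrival time $t_1$ of the maximum. If $t_1 \le \tau$, use $\ALG \ge 0$. If $t_1 > \tau$, then $\BSF(\tau) < w_1$ (the best weight seen before $\tau$ is not $w_1$) and $c \le w_1$, so at time $t_1$ the running threshold $\max(\BSF(\tau),c)$ is at most $w_1$; hence if the algorithm has not stopped earlier it accepts $w_1$ at time $t_1$. Either way, conditioned on $t_1 > \tau$ the algorithm accepts an element, necessarily of weight $\ge c$, which gives $\Ex[]{\ALG} \ge c(1-\tau)$ as a first crude bound.

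For (ii), set $S = \{\, i : w_i \ge c\,\}$ and $m = |S|$; then $1 \in S$ while $w_k < \tfrac12 w_1 < c$ gives $k \notin S$, so $1 \le m \le k-1$. The clean claim is that $w_1$ is accepted whenever $t_1 > \tau$ and no element of $S$ arrives in the interval $(\tau, t_1)$: in that case nothing in that interval clears the threshold $c$, so the algorithm reaches $t_1$ without stopping and takes $w_1$. Routing the argument through $S$ this way is what lets me avoid reasoning about the precise value of $\BSF(\tau)$, which is the one spot where the proof could get fiddly. Integrating over $t_1$, this event has probability $\int_0^{1-\tau}(1-s)^{m-1}\,ds = \tfrac{1-\tau^m}{m}$, so $\Pr[]{\text{accept }w_1} \ge \tfrac{1-\tau^m}{m} \ge \tfrac{1-\tau}{k-1}$, using $\tau^m \le \tau$ and $m \le k-1$.

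To finish, since $\{\text{accept }w_1\} \subseteq \{t_1 > \tau\}$ and $w_1 - c = w_k$, combining (i) and (ii) gives $\Ex[]{\ALG} \ge w_1\,\Pr[]{\text{accept }w_1} + c\big((1-\tau) - \Pr[]{\text{accept }w_1}\big) = c(1-\tau) + w_k\,\Pr[]{\text{accept }w_1}$. Substituting $c = w_1 - w_k$ and the lower bound on $\Pr[]{\text{accept }w_1}$, the coefficient of $w_k$ in the resulting expression is $\tfrac{1-\tau^m}{m} - (1-\tau) \le 0$, so the expression is smallest when $w_k$ is as large as the hypothesis permits, i.e. in the limit $w_k \to \tfrac12 w_1$; this yields $\Ex[]{\ALG} \ge w_1\big(\tfrac{1-\tau}{2} + \tfrac{1-\tau^m}{2m}\big) \ge (1-\tau)\big(\tfrac12 + \tfrac1{2(k-1)}\big)w_1$, once more via $\tfrac{1-\tau^m}{m}\ge\tfrac{1-\tau}{k-1}$. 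The only genuinely delicate step is the clean handling of the $\BSF(\tau)$ term in (i) and (ii); the remainder is bookkeeping plus a one-variable monotonicity argument in $w_k$.
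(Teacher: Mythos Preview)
Your proof is correct and follows essentially the same route as the paper's: both exploit $c > \tfrac12 w_1 > w_k$ to conclude that only elements in $S=\{i:w_i\ge c\}$ (the paper's $\{w_1,\dots,w_l\}$) can be accepted, that something is accepted whenever $t_1>\tau$, and that $w_1$ itself is accepted with probability at least $(1-\tau)/|S|\ge(1-\tau)/(k-1)$. The only cosmetic differences are that you compute $\Pr[\text{accept }w_1]$ by an explicit integral (yielding the slightly sharper $\tfrac{1-\tau^m}{m}$) rather than the paper's symmetry argument, and you defer the use of $c>\tfrac12 w_1$ to a final monotonicity step in $w_k$ rather than substituting it immediately.
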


The second lemma will be used to give a bound when $w_k$ is large compared to $w_1$.

\begin{lemma}\label{lemma:proof_case_2}
    If $w_k \geq \frac{1}{2} w_1$, then the following two bounds hold:
    \begin{itemize}
        \item[(i)] $\Ex[]{\ALG} \geq \frac{k+1}{2k} \left( 1 - \tau -(1-\tau)^{k+1}  \right) \cdot w_1$ and 
        \item[(ii)] $\Ex[]{\ALG} \geq \left( \frac{3}{2} \tau \ln\left( \frac{1}{\tau} \right) - \frac{1}{2} \tau(1-\tau)  \right) \cdot w_1$ \enspace.
    \end{itemize} 
    As a consequence, $\Ex[]{\ALG}$ is also at least as large as the maximum of the two bounds.
\end{lemma}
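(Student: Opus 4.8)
The plan is to derive both inequalities from a single conditioning on which of the top weights land before the threshold time. Let me write $R \coloneqq \min\{i : t_i \le \tau\}$ for the rank of the best-so-far element at time $\tau$ (so $\BSF(\tau) = w_R$), with $R = \infty$ if nothing arrives before $\tau$; then $\Pr[]{R = r} = (1-\tau)^{r-1}\tau$ for $1 \le r \le n$ and $\Pr[]{R = \infty} = (1-\tau)^n$. The case hypothesis $w_k \ge \tfrac12 w_1$ enters only through $c = w_1 - w_k \le \tfrac12 w_1 \le w_k$, which in particular gives $w_j \ge w_k \ge c$ for all $j \le k$: in this regime the gap term in the threshold never exceeds any of the top $k$ weights.

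The first step, and the one I expect to be the real work, is the structural claim: conditioned on $\{R = r\}$ with $r \ge 2$, Algorithm~\ref{Algorithm:SAG_general} accepts whichever of the weight-ranks $1, \dots, q$ arrives first after time $\tau$, where $q = \min(r-1, \ell)$ and $\ell \coloneqq |\{j : w_j \ge c\}| \ge k$. Indeed, after $\tau$ the algorithm takes the first element of weight at least $\max(\BSF(\tau), c) = \max(w_r, c)$; any rank $> r$ has weight $< w_r$ and is never eligible, rank $r$ itself arrived before $\tau$, and ranks $1, \dots, r-1$ all arrived after $\tau$ by definition of $R$, so intersecting with the weight cutoff leaves exactly ranks $1, \dots, q$, all post-$\tau$. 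Exchangeability of the arrival times makes the accepted rank uniform on $\{1, \dots, q\}$, whence $\Ex[]{\ALG \mid R = r} = \tfrac1q\sum_{j=1}^q w_j \ge \tfrac{1}{r-1}\sum_{j=1}^{r-1} w_j$ (a prefix average only decreases as it lengthens, and $1 \le q \le r-1$); and $\ALG = 0$ on $\{R = 1\}$. Using that $w_1$ contributes its full value while $w_j \ge \tfrac12 w_1$ for $2 \le j \le k$, this gives $\Ex[]{\ALG \mid R = r} \ge \tfrac{r}{2(r-1)}\, w_1$ for $2 \le r \le k+1$, and $\Ex[]{\ALG \mid R = r} \ge \tfrac{k+1}{2(r-1)}\, w_1$ for every $r \ge k+1$ (retaining only ranks $\le k$ in the sum). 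The $\{R = \infty\}$ case I would fold in as the ``$r > n$'' tail: it contributes at least $(1-\tau)^n\,\tfrac{k+1}{2n}\, w_1$, which dominates $\sum_{r > n}(1-\tau)^{r-1}\tau\,\tfrac{k+1}{2(r-1)}\, w_1$, so the sums below may be taken to $\infty$.

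For bound~(i) I would sum the first estimate over $r = 2, \dots, k+1$: since $\tfrac{r}{2(r-1)}$ decreases in $r$ it is $\ge \tfrac{k+1}{2k}$ on that range, so $\Ex[]{\ALG} \ge \tfrac{k+1}{2k}\, w_1 \sum_{r=2}^{k+1}(1-\tau)^{r-1}\tau = \tfrac{k+1}{2k}\bigl(1 - \tau - (1-\tau)^{k+1}\bigr)\, w_1$. For bound~(ii) I would keep all $r \ge 2$: the tail $r \ge k+1$ contributes $\tfrac{k+1}{2}\, w_1\,\tau\sum_{s \ge k} \tfrac{(1-\tau)^s}{s}$, and combining with the $2 \le r \le k$ terms, reindexing $s = r-1$ in the latter, and using $\sum_{s \ge 1}\tfrac{(1-\tau)^s}{s} = \ln(1/\tau)$, everything collapses to $\bigl(\tfrac{k+1}{2}\tau\ln(1/\tau) + \tfrac{\tau}{2}\sum_{s=1}^{k-1}(1-\tau)^s - \tfrac{k}{2}\tau\sum_{s=1}^{k-1}\tfrac{(1-\tau)^s}{s}\bigr)\, w_1$. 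Subtracting the target $\bigl(\tfrac32\tau\ln(1/\tau) - \tfrac12\tau(1-\tau)\bigr)\, w_1$ and using $\ln(1/\tau) \ge \sum_{s=1}^{k-1}\tfrac{(1-\tau)^s}{s}$ to absorb the surplus factor $\tfrac{k-2}{2}$, the required inequality reduces to the trivial $\sum_{s=2}^{k-1}\bigl(1 - \tfrac2s\bigr)(1-\tau)^s \ge 0$. The closing ``maximum'' sentence is immediate, both quantities being lower bounds on $\Ex[]{\ALG}$.

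Besides the structural claim, the remaining obstacles are minor: the $\{R = \infty\}$/finite-$n$ tail comparison just mentioned, and ties at the very top when $c = 0$ (so $w_1 = \dots = w_k$), where the ``distinct weights'' phrasing needs a one-line perturbation argument — but ties only help, so nothing is lost. The algebra for bound~(ii), though a bit fiddly, is routine once the conditional expectations are in hand.
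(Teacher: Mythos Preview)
Your argument is correct. For bound~(i), your conditioning on the rank $R$ of $\BSF(\tau)$ and the estimate $\Ex[]{\ALG \mid R=r} \ge \tfrac{r}{2(r-1)}\, w_1$ for $2 \le r \le k+1$ is exactly the paper's route, with the same probabilities $\Pr[]{R=r} = \tau(1-\tau)^{r-1}$ and the same geometric sum.

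For bound~(ii), you genuinely diverge. The paper does \emph{not} extend the $R$-conditioning to all $r \ge 2$; instead it proves a comparison principle --- whenever $w_i \ge c$, Algorithm~\ref{Algorithm:SAG_general} accepts $w_i$ with at least the probability that the classical gap-free $\BSF(\tau)$-threshold algorithm does --- and then invokes the standard secretary integrals to obtain $\Pr[]{\text{select } w_1} \ge \tau\ln(1/\tau)$ and $\Pr[]{\text{select } w_2} \ge \tau\ln(1/\tau) - \tau(1-\tau)$, finishing via $w_2 \ge \tfrac12 w_1$. You instead reuse the structural claim for every $r \ge 2$, sum the full series, and verify by direct algebra that the result dominates the target. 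Your route is self-contained (no separate comparison lemma, no appeal to the classical integral computation) and actually produces a sharper intermediate bound before you discard terms; the paper's route is shorter and isolates a clean monotonicity observation --- adding the gap to the threshold can only raise the acceptance probability of any element above the gap --- which it later reuses in the robustness-consistency and bounded-error sections. Both arguments are valid.
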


We start with a proof of Lemma~\ref{lemma:proof_case_1} when $w_k$ is small compared to $w_1$.

\begin{proof}[Proof of Lemma~\ref{lemma:proof_case_1}.]
    
Let $w_k < \frac{1}{2} w_1$. Observe that in this case, the gap is quite large, as \[ c_k = w_1 - w_k > w_1 - \frac{1}{2} w_1 = \frac{1}{2} w_1 > w_k \geq \dots \geq w_n \enspace. \]  In particular, the gap is large enough such that the algorithm either selects nothing (if $w_1$ arrives before $\tau$) or some element among $w_1, \dots, w_l$ for some $1 \leq l \leq k-1$. To see this, first observe that the gap never overshoots $w_1$ as we are always ensured that $w_1 \geq c_k$. In addition, we only accept elements strictly larger than $w_k$ in this case. Let us define $l$ to be the index of element with $w_l \geq c_k > w_{l+1}$, i.e. the last element which is not excluded from a possible acceptance by $c_k$.

Hence, we can bound 
\begin{align*}
    \Ex[]{\ALG} & = \Pr[]{\textnormal{Best arrives after } \tau} \cdot \Ex[]{\ALG \growingmid \textnormal{Best arrives after } \tau} \\ 
    & \geq (1- \tau) \left( \frac{1}{2} w_1 \frac{l-1}{l} + \frac{1}{l} w_1 \right) \\ 
    & \geq w_1 (1-\tau) \left( \frac{1}{2} + \frac{1}{2(k-1)} \right) \enspace.
\end{align*}
To see why the first inequality holds, note that the probability of an element to arrive after $\tau$ is precisely $1-\tau$. In addition, conditioned on the best element arriving after $\tau$, we are ensured to accept some element among the first $l$ elements. By the random arrival times, we accept the best element $w_1$ in at least a $\frac{1}{l}$-fraction of scenarios. In the remainder, we will select an element of weight at least $\frac{1}{2} w_1$. The second inequality uses $l \leq k-1$. 
\end{proof}

In this case, we actually accept $w_1$ with much higher probability than $\frac{1}{l}$. In particular, observe that we exclude $w_{l+1},\dots,w_n$ by the gap in the threshold, so the problem actually boils down to solving a secretary instance with $l$ elements. However, the bound presented in the proof of Lemma~\ref{lemma:proof_case_1} is sufficient for our purposes.

Next, we turn to the regime when $w_k$ is large and prove the two bounds from Lemma~\ref{lemma:proof_case_2}.

\begin{proof}[Proof of Lemma~\ref{lemma:proof_case_2} (i).]
    
Let $w_k \geq \frac{1}{2} w_1$. In this case, the gap can be quite small. Still, we are guaranteed that selecting any element among $w_2, \dots, w_k$ achieves at least a weight of $\frac{1}{2} w_1$.

We condition on the event $w_i = \BSF(\tau)$ for $i \in \{2, \hdots, n\}$. For any $i \leq k+1$, having element $w_i$ as $\BSF(\tau)$, we will always accept the first element among $w_1, \dots, w_{i-1}$ arriving after $\tau$. To see this, note that none of these elements will be excluded by the gap $c_k$ in the threshold of the algorithm as \[ c_k = w_1 - w_k \leq w_1 - \frac{1}{2} w_1 = \frac{1}{2} w_1 \leq w_k \enspace. \] We first give a bound on the expected weight achieved by the algorithm conditioned on seeing $w_i$ as the $\BSF(\tau)$ for some $2 \leq i \leq k+1$. When seeing $w_i$ as the $\BSF(\tau)$, we select $w_1$ in a $\frac{1}{i-1}$-fraction of scenarios. In addition, any element $w_2,\dots,w_{i-1}$ is ensured to have a weight at least $\frac{1}{2} w_1$. As a consequence, using that we only consider $i \leq k+1$,
\begin{align}\label{inequality:expected_algo_conditioned}
    \Ex[]{ \ALG \growingmid w_i \textnormal{ is } \BSF(\tau) } \geq \frac{1}{i-1} w_1 + \frac{i-2}{i-1} \cdot \frac{1}{2} w_1 \geq \frac{1}{2} \left(1+\frac{1}{k} \right) \cdot w_1  \enspace.
\end{align}
Using this, we can derive the following lower bound where in the first inequality, we use that when $w_1$ is $\BSF(\tau)$, we will select nothing.

\begin{align*}
    \Ex[]{\ALG} & \geq \sum_{i=2}^{k+1} \Pr[]{w_i \textnormal{ is } \BSF(\tau) } \cdot \Ex[]{ \ALG \growingmid w_i \textnormal{ is } \BSF(\tau) } \\ 
    & = \sum_{i=2}^{k+1} \tau (1-\tau)^{i-1} \cdot \Ex[]{ \ALG \growingmid w_i \textnormal{ is } \BSF(\tau) } \\
    & \stackrel{\eqref{inequality:expected_algo_conditioned}}{\geq} 
    w_1 \cdot \frac{1}{2} \left(1+\frac{1}{k} \right) \tau \sum_{i=2}^{k+1} (1-\tau)^{i-1} \\
    &= w_1 \cdot \frac{1}{2} \left(1+\frac{1}{k} \right) \tau \left( \frac{1-(1-\tau)^{k+1}}{\tau} -1 \right) \\
    & = w_1 \cdot \frac{1}{2} \left(1+\frac{1}{k} \right) \left( 1 - \tau -(1-\tau)^{k+1}  \right)
\end{align*}
The second equality uses a geometric sum to simplify the expression.
\end{proof}

In addition, we can use an alternate analysis, which is tighter for small $k$. 

\begin{proof}[Proof of Lemma~\ref{lemma:proof_case_2} (ii).]

Let $w_k \geq \frac{1}{2} w_1$. We only consider the probability of selecting the best or second best element. Observe that the second best element satisfies $w_2 \geq w_k \geq \frac{1}{2} w_1$ by the case distinction, no matter for which $k$ we observe the gap. Our goal will be to lower bound the acceptance probabilities of $w_1$ and $w_2$ with the ones from the classical secretary problem. To this end, we first observe that for any element $w_i$, if $w_i \geq c_k$, then \begin{align}\label{Observation:Select_i_larger_than_gap}
    \Pr[]{\textnormal{Algorithm~\ref{Algorithm:SAG_general} selects } w_i} \geq \Pr[]{\textnormal{An algorithm with threshold $\BSF(\tau)$ selects } w_i} \enspace.
\end{align}
To see why this inequality holds, let $w_i$ arrive after time $\tau$. If $w_i$ is the first element to surpass $\BSF(\tau)$, by the hypothesis that $w_i \geq c_k$, $w_i$ is also the first to surpass the threshold used by Algorithm~\ref{Algorithm:SAG_general}.

Using this, we can lower bound the selection probabilities of $w_1$ and $w_2$ by the ones of an algorithm which is just using $\BSF(\tau)$ as a threshold. The next few lines of calculations are folklore for the classical secretary problem. Still, for the sake of completeness, we reprove them here. 

To bound the probability of selecting $w_1$ by an algorithm which uses $\BSF(\tau)$ as a threshold, observe that if $w_1$ arrives before $\tau$, it will be rejected. If it arrives at some time $x \in (\tau,1]$, there are two cases in which we accept $w_1$: Either no other element did arrive before (by the i.i.d. arrival times, this happens with probability $(1-x)^{n-1}$) or the best element before did arrive before $\tau$. The latter happens with probability $\frac{\tau}{x}$. As a consequence, we get

\begin{align*}
    \Pr[]{\textnormal{Algorithm~\ref{Algorithm:SAG_general} selects } w_1} & \stackrel{\eqref{Observation:Select_i_larger_than_gap}}{\geq} \Pr[]{\textnormal{An algorithm with threshold $\BSF(\tau)$ selects } w_1} \\ 
    & = \int_\tau^1 (1-x)^{n-1}  + \left( 1- (1-x)^{n-1} \right) \frac{\tau}{x} \ dx \geq \int_\tau^1 \frac{\tau}{x} \ dx = \tau \ln\left( \frac{1}{\tau} \right) \space.
\end{align*}

Similarly, for $w_2$ we can compute the same integral after conditioning on $w_1$ arriving after time $x$. Observe that the probability of $w_1$ arriving after time $x$ is precisely $1-x$.

\begin{align*}
    \Pr[]{\textnormal{Algorithm~\ref{Algorithm:SAG_general} selects } w_2} & \stackrel{\eqref{Observation:Select_i_larger_than_gap}}{\geq} \Pr[]{\textnormal{An algorithm with threshold $\BSF(\tau)$ selects } w_2} \\ 
    & = \int_\tau^1 (1-x) \left( (1-x)^{n-2}  + \left( 1- (1-x)^{n-2} \right) \frac{\tau}{x} \right) \ dx \\ & \geq \int_\tau^1 (1-x) \frac{\tau}{x} \ dx = \tau \ln\left( \frac{1}{\tau} \right) - \tau(1-\tau) \enspace.
\end{align*}

Using that $w_2 \geq \frac{1}{2} w_1$, we obtain 

\begin{align*}
    \Ex[]{\ALG} & \geq w_1 \tau \ln\left( \frac{1}{\tau} \right) + w_2 \left( \tau \ln\left( \frac{1}{\tau} \right) - \tau(1-\tau) \right) \\ 
    & \geq w_1 \left( \tau \ln\left( \frac{1}{\tau} \right) + \frac{1}{2} \left( \tau \ln\left( \frac{1}{\tau} \right) - \tau(1-\tau) \right) \right) \\ 
    & = w_1 \left( \frac{3}{2} \tau \ln\left( \frac{1}{\tau} \right) - \frac{1}{2} \tau(1-\tau) \right) \enspace.
\end{align*}
\end{proof}

Having these two lemmas, we can now conclude the proof of the main theorem.

\begin{proof}[Proof of Theorem~\ref{Theorem:general_gap}.]

We use the lower bound obtained by Lemma~\ref{lemma:proof_case_1}. From Lemma~\ref{lemma:proof_case_2}, we take the maximum of the two bounds into consideration. Since we do not know upfront to which case our instance belongs, we can only obtain the minimum of the two as a general lower bound on the weight achieved by the algorithm.

As a consequence, we obtain $\Ex[]{\ALG} \geq \alpha \cdot w_1$ for  
\begin{align}\label{inequality:comp_ratio}
    \alpha \coloneqq \min \left(     \frac{(1-\tau) k}{2(k-1)}  ;   \max \left( \frac{k+1}{2k} \left( 1 - \tau -(1-\tau)^{k+1}  \right); \frac{3}{2} \tau \ln\left( \frac{1}{\tau} \right) - \frac{1}{2} \tau(1-\tau)  \right) \right)
\end{align}
which depends on the waiting time $\tau$. Now, we plug in $\tau = 1 - \left(\frac{1}{k+1}\right)^{1/k}$. First, note that we can bound the maximum in $\alpha$ with the first of the two terms. Factoring out a $1-\tau$, we get 
\begin{align*}
\alpha & \geq (1-\tau) \cdot \min \left( \frac{k}{2(k-1)} ; \frac{k+1}{2k} \left( 1 - (1-\tau)^k \right) \right) \\
& = \left(\frac{1}{k+1}\right)^{1/k} \cdot \min  \left( \frac{k}{2(k-1)} ; \frac{k+1}{2k} \cdot \frac{k}{k+1} \right)  \\
& = \frac{1}{2} \left(\frac{1}{k+1}\right)^{1/k}  \enspace.
\end{align*}

To compensate for the poor performance of this bound for small $k$, we can use basic calculus to state the following.

After plugging in our choice of $\tau$ into Expression~\eqref{inequality:comp_ratio}, the first term is minimized for $k = 7$ for a value of at least $0.43$. For $2 \leq k \leq 11$, the last term is always at least $0.404$ and for any $k \geq 12$, the second term exceeds $0.403$. Hence, we always ensure that $\alpha \geq 0.4$. 
\end{proof}

From a high level perspective, the two lemmas give a reasonable bound depending of we either exclude a lot of elements in the algorithm (\Cref{lemma:proof_case_1}) or if the largest $k$ elements ensure a sufficient contribution (\Cref{lemma:proof_case_2}). 

As a corollary of the proof of Theorem~\ref{Theorem:general_gap}, we also get a lower bound on the weight achieved by the algorithm if we are only given the gap, but not the element which obtains this gap. That is, we are given $c_k$ but not the index $k$.

\begin{corollary}\label{corollary:unknown_k}
    If the algorithm only knows $c_k$, but not $k$, setting $\tau = 0.2$ achieves $\Ex[]{\ALG}\geq 0.4 \cdot w_1 $.
\end{corollary}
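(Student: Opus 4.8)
\textbf{Proof proposal for Corollary~\ref{corollary:unknown_k}.}

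The plan is to exploit the fact that Algorithm~\ref{Algorithm:SAG_general} takes only the gap $c$ and the waiting time $\tau$ as input, and never references the index $k$ during its execution. Consequently, fixing $\tau = 0.2$ once and for all yields a genuinely $k$-oblivious algorithm, and it remains to verify that the competitive guarantee extracted from \Cref{lemma:proof_case_1} and \Cref{lemma:proof_case_2} is at least $0.4$ for \emph{every} index $k$ when $\tau = 0.2$. Since both lemmas are stated for an arbitrary waiting time, this amounts to substituting $\tau = 0.2$ and taking the minimum over the two (exhaustive) cases of the instance.

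First I would treat the case $w_k < \tfrac12 w_1$. Plugging $\tau = 0.2$ into \Cref{lemma:proof_case_1} gives
\[
\Ex[]{\ALG} \;\geq\; (1-\tau)\left(\tfrac12 + \tfrac{1}{2(k-1)}\right) w_1 \;\geq\; 0.8\cdot\tfrac12\cdot w_1 \;=\; 0.4\, w_1,
\]
where the last inequality just drops the nonnegative term $\tfrac{1}{2(k-1)}$; crucially this is uniform in $k$. For the complementary case $w_k \geq \tfrac12 w_1$, the key observation is that bound~(ii) of \Cref{lemma:proof_case_2} is itself independent of $k$. Substituting $\tau = 0.2$,
\[
\Ex[]{\ALG} \;\geq\; \left(\tfrac32 \tau \ln\!\left(\tfrac1\tau\right) - \tfrac12 \tau(1-\tau)\right) w_1 \;=\; \left(0.3\ln 5 - 0.08\right) w_1 \;>\; 0.4\, w_1,
\]
since $0.3\ln 5 \approx 0.4828$. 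Combining the two cases, the algorithm achieves $\Ex[]{\ALG}\geq 0.4\,w_1$ no matter which regime the instance lies in and no matter what $k$ is.

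I do not expect a real obstacle here; the argument is essentially bookkeeping on top of the two lemmas. The only points that need a little care are: (a) confirming that in the small-gap regime the guarantee should be driven by the $k$-independent bound~(ii) of \Cref{lemma:proof_case_2} rather than bound~(i), since bound~(i) at $\tau=0.2$ degrades for small $k$ (e.g.\ at $k=2$); and (b) checking that the single value $\tau = 0.2$ is simultaneously small enough to keep the $(1-\tau)$ factor in \Cref{lemma:proof_case_1} above $0.8$ and large enough to keep $\tfrac32\tau\ln(1/\tau) - \tfrac12\tau(1-\tau)$ above $0.4$. A one-line numeric verification of $(1-0.2)/2 = 0.4$ and $0.3\ln 5 - 0.08 > 0.4$ closes the proof.
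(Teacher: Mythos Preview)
Your proposal is correct and follows essentially the same route as the paper: both argue that fixing $\tau=0.2$ makes the algorithm oblivious to $k$, then plug $\tau=0.2$ into the bounds of \Cref{lemma:proof_case_1} and \Cref{lemma:proof_case_2}, using the $k$-independent bound~(ii) of the latter to handle the small-gap regime. The paper phrases this via the combined expression~\eqref{inequality:comp_ratio} and then lower-bounds the inner maximum by $0.3\ln 5 - 0.08 > 0.4$, which is exactly your computation.
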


The proof mainly relies on the fact that the lower bound we obtained in the proof of Theorem~\ref{Theorem:general_gap} holds for any choice $\tau \in [0,1]$. Also, the algorithm itself only uses the gap to contribute to the threshold. The index $k$ is only used to compute $\tau$. As a consequence, when choosing $\tau = 0.2$ independent of $k$, the algorithm is oblivious to the exact value of $k$, but only depends on the gap $c_k$. For this choice of $\tau$, we can show that $\alpha \geq 0.4$.  

\begin{proof}[Proof of Corollary~\ref{corollary:unknown_k}.]
    Note that the lower bound in Expression~\eqref{inequality:comp_ratio} holds for any choice of $\tau \in [0,1]$. Also, the algorithm itself only uses the gap to contribute to the threshold. The index $k$ is only used to compute $\tau$. As a consequence, when choosing $\tau = 0.2$ independent of $k$, the algorithm is oblivious to the exact value of $k$, but only depends on the gap $c_k$. 

    For $\tau =  0.2$, the value of $\alpha$ in Expression~\eqref{inequality:comp_ratio} satisfies
    \begin{align*}
        \alpha & = \min \left( 0.8 \cdot \frac{k}{2(k-1)} ; \max \left( \frac{k+1}{2k} \left( 0.8 - 0.8^{k+1} \right) ; 0.3 \ln(5) - 0.08 \right)   \right) \\ 
        & \geq \min \left( 0.8 \cdot \frac{1}{2} ; \max \left( \frac{k+1}{2k} \left( 0.8 - 0.8^{k+1} \right) ; 0.4 \right)   \right) \\ 
        & \geq 0.4
    \end{align*}
    and hence, we get a competitive ratio of at least $0.4$. 
\end{proof}

As a consequence, very surprisingly, even if we only get to know \emph{some} additive gap $c_k$ and not even the index $k$, we can outperform the prevalent bound of $\nicefrac{1}{\e}$. Also, observe that this is independent of the exact value that $c_k$ attains and holds for any small or large gaps.

As mentioned before, \Cref{Algorithm:SAG_general} is required to get the exact gap as input. In particular, once the gap we use in the algorithm is a tiny bit larger than the actual gap $c_k$, we might end up selecting no element at all. 

\begin{example}
	We get to know the gap to the smallest weight $c_n = w_1 - w_n$ and the smallest weight $w_n$ in the sequence satisfies $w_n = 0$. Let the gap which we use in \Cref{Algorithm:SAG_general} be only some tiny $\delta > 0$ too large. In other words, we use $c_n + \delta$ as a gap in the algorithm. Still, this implies that our threshold $\max(\BSF(\tau), c_n + \delta)$ after the waiting time satisfies \[ \max(\BSF(\tau), c_n + \delta) \geq c_n + \delta = w_1 + \delta > w_1 \geq w_2 \geq \dots \geq w_n \enspace. \] As a consequence, we end up selecting no weight at all and have $\Ex{\ALG} = 0$. 
\end{example}

This naturally motivates the need to introduce more robust deterministic algorithms in this setting. The next \Cref{section:robustness-consistency} will show that a slight modification in the algorithm and its analysis allows to obtain robustness to errors in the predictions while simultaneously outperforming $\nicefrac{1}{\e}$ for accurate gaps.

    \section{Robustness-Consistency Trade-offs}
\label{section:robustness-consistency}

Next, we show how to slightly modify our algorithm in order to still beat $\nicefrac{1}{\e}$ when getting the correct gap as input, but still be constant competitive in case the predicted gap is inaccurate. The modification leads to \Cref{alg:RC}: Initially, we run the same algorithm as before. After some time $1- \gamma$, we will lower our threshold in order to hedge against an incorrect prediction. 
\begin{algorithm}[H]
\caption{Robust-Consistent Algorithm}
\label{alg:RC}
\begin{algorithmic}
    \STATE \textbf{Input:} Predicted gap $\hat{c}$, times $\tau \in [0,1)$, $\gamma \in [0,1-\tau)$ \\
    \STATE Before time $\tau$: \hspace*{33.5mm} Observe weights $w_i$ \\
    \STATE At time $\tau$: \hspace*{40mm} Compute $\BSF(\tau) = \max_{i : t_i \leq \tau} w_i$ \\
    \STATE Between time $\tau $ and time $1 - \gamma$:  \hspace*{3.9mm} Accept first element with $w_i \geq \max( \BSF(\tau), \hat{c} )$
    \STATE After time $1-\gamma$: \hspace*{28.7mm} Accept first element with $w_i \geq \BSF(\tau)$
\end{algorithmic}
\end{algorithm}	

Note that by $\gamma \in [0,1-\tau)$, we ensure that $\tau < 1-\gamma$, i.e.\ the waiting time $\tau$ is not after time $1- \gamma$ and hence, the algorithm is well-defined. Now, we can state the following theorem which gives guarantees on the consistency and the robustness of \Cref{alg:RC}. We will discuss afterwards how to choose $\tau$ and $\gamma$ in order to outperform the classical bound of $\nicefrac{1}{\e}$ by a constant for accurate predictions while ensuring to be constant-robust at the same time.
\begin{theorem} \label{thm:RC}
    Given a prediction $\hat{c}_k$ for the additive gap $c_k$, define 
    \vspace{-0.25cm}
    \begin{itemize}
        \item $\alpha_1 \coloneqq 1-\gamma -\tau + \tau \ln \left( \frac{1}{1-\gamma} \right)$ and \\ $\alpha_2 \coloneqq \frac{1}{2} \left( (1+\gamma)(1-\tau -\gamma) + \tau \ln\left( \frac{1}{\tau} \right)  +  \tau \ln\left( \frac{1}{1-\gamma} \right)  \right) $,
        \item $\alpha_3 \coloneqq \frac{k+1}{2k} \left( 1 - \tau -(1-\tau)^{k+1}  \right)$ and $\alpha_4 \coloneqq  \frac{3}{2} \tau \ln\left( \frac{1}{\tau} \right) - \frac{1}{2} \tau(1-\tau) $. 
    \end{itemize}
    Then, \Cref{alg:RC} is (i) $\min \left( \min \left( \alpha_1, \alpha_2 \right) , \max \left( \alpha_3 , \alpha_4 \right) \right)$-consistent and (ii) $\left( \tau \cdot \ln \left( \frac{1}{1-\gamma} \right) \right)$-robust. 
\end{theorem}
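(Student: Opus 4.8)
The plan is to follow the two–case skeleton of Theorem~\ref{Theorem:general_gap} for the consistency part~(i), and to read the robustness bound~(ii) straight off the final ``hedging'' phase. For~(ii), I would show that, regardless of $\hat c$, $w_1$ is accepted with probability at least $\tau\ln\tfrac{1}{1-\gamma}$. Consider the event that $w_1$ arrives at some $x\in(1-\gamma,1]$ and that the heaviest element arriving in $[0,x)$ already arrived in $[0,\tau]$. On this event $\BSF(\tau)$ equals that maximum, so no element of $(\tau,x)$ reaches $\BSF(\tau)$; hence nothing reaches the threshold $\max(\BSF(\tau),\hat c)\ge\BSF(\tau)$ used on $(\tau,1-\gamma)$ nor the threshold $\BSF(\tau)$ used on $(1-\gamma,x)$, so nothing is accepted before $w_1$, and then $w_1\ge\BSF(\tau)$ is accepted. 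Its probability is $\int_{1-\gamma}^1\big((1-x)^{n-1}+(1-(1-x)^{n-1})\tfrac\tau x\big)\,dx\ge\int_{1-\gamma}^1\tfrac\tau x\,dx=\tau\ln\tfrac{1}{1-\gamma}$, exactly the computation reproduced for Lemma~\ref{lemma:proof_case_2}(ii), giving $\Ex{\ALG}\ge\tau\ln\tfrac{1}{1-\gamma}\cdot w_1$.

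For the consistency part~(i) (so $\hat c=c_k$), I split on $w_k$ versus $\tfrac12 w_1$. In the regime $w_k\ge\tfrac12 w_1$, I would recover $\alpha_3$ and $\alpha_4$ verbatim from the proofs of Lemma~\ref{lemma:proof_case_2}(i) and~(ii); the only new point is that lowering the threshold to $\BSF(\tau)$ after time $1-\gamma$ does no harm. For $\alpha_3$, conditioning on $w_i=\BSF(\tau)$ with $2\le i\le k+1$, the element accepted is still the first of $w_1,\dots,w_{i-1}$ after $\tau$: all of these have weight $\ge w_k\ge c_k$, so none is blocked on $(\tau,1-\gamma)$, and on $(1-\gamma,1]$ the relaxed threshold admits no new, smaller candidate because every element of weight $\ge\BSF(\tau)=w_i$ arriving after $\tau$ is among $w_1,\dots,w_{i-1}$. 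For $\alpha_4$, I would re-prove the event-containment behind inequality~\eqref{Observation:Select_i_larger_than_gap}: whenever the classical $\BSF(\tau)$-threshold policy selects $w_i$ for $i\in\{1,2\}$, so does Algorithm~\ref{alg:RC} (using $w_1\ge c_k$ always and $w_2\ge w_k\ge c_k$, and that nothing is accepted before $w_i$ under either the middle threshold $\ge\BSF(\tau)$ or the late threshold $=\BSF(\tau)$), hence $\Pr{\text{accept }w_1}\ge\tau\ln\tfrac1\tau$ and $\Pr{\text{accept }w_2}\ge\tau\ln\tfrac1\tau-\tau(1-\tau)$, and $w_2\ge\tfrac12 w_1$ yields $\Ex{\ALG}\ge\alpha_4 w_1$.

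In the regime $w_k<\tfrac12 w_1$ we have $c_k>\tfrac12 w_1$; let $l$ satisfy $w_l\ge c_k>w_{l+1}$, so $1\le l\le k-1$ and $w_1,\dots,w_l$ all exceed $\tfrac12 w_1$. The structural fact is that the threshold is at least $c_k$ on $[0,1-\gamma)$, so any element accepted before $1-\gamma$ lies in $\{w_1,\dots,w_l\}$. If $l=1$, only $w_1$ can ever trigger on $(\tau,1-\gamma)$, so on $\{w_1\in(\tau,1-\gamma)\}$ we accept $w_1$ outright; adding the disjoint robustness event from~(ii) gives $\Pr{\text{accept }w_1}\ge(1-\tau-\gamma)+\tau\ln\tfrac{1}{1-\gamma}=\alpha_1$, i.e.\ $\Ex{\ALG}\ge\alpha_1 w_1$. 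If $l\ge2$ (so $w_2\ge c_k>\tfrac12 w_1$), I would use $\ALG\ge\tfrac12 w_1\big(\mathbbm{1}[\ALG\ge\tfrac12 w_1]+\mathbbm{1}[\ALG=w_1]\big)$ and bound the two probabilities separately: $\Pr{\ALG=w_1}\ge\tau\ln\tfrac1\tau$ by the same containment as above, and $\Pr{\ALG\ge\tfrac12 w_1}$ by the union of two disjoint events — ``at least one of $w_1,\dots,w_l$ lands in $[0,1-\gamma]$ and the heaviest such lands in $(\tau,1-\gamma)$'', which forces a high element ($\ge c_k$) to be accepted on $(\tau,1-\gamma)$ and has probability $\tfrac{1-\tau-\gamma}{1-\gamma}(1-\gamma^l)\ge(1+\gamma)(1-\tau-\gamma)$, and the robustness event of~(ii), of probability $\ge\tau\ln\tfrac{1}{1-\gamma}$ (disjoint from the former because it forces no high element in $(\tau,1-\gamma)$). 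Summing and multiplying by $\tfrac12 w_1$ gives $\Ex{\ALG}\ge\alpha_2 w_1$. Taking the worst of the three regimes ($l=1$; $l\ge2$ within $w_k<\tfrac12 w_1$; $w_k\ge\tfrac12 w_1$), since we cannot tell which one holds, yields consistency $\min(\min(\alpha_1,\alpha_2),\max(\alpha_3,\alpha_4))$.

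The delicate point throughout is the final phase, where the threshold collapses to $\BSF(\tau)$ and the algorithm could in principle accept a nearly worthless element; one must argue this is never a net loss. In the large-gap case this rests on (a) pre-$(1-\gamma)$ acceptances being automatically ``safe'' (weight $\ge c_k$), and (b) the two disjointness claims guaranteeing that the gain ``$w_1$ arrives after $1-\gamma$ and is grabbed'' is not cancelled against ``a heavy element was already secured early''. Pinning down these disjointness statements — in particular that on the robustness event no heavy element lies in $(\tau,1-\gamma)$, so that $\BSF(\tau)$ really is the running maximum up to the arrival of $w_1$ — is the part requiring care (and a routine tie-breaking/perturbation remark); everything else is the geometric-sum and $\int\tau/x$ bookkeeping already carried out for Theorem~\ref{Theorem:general_gap}.
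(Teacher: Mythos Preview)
Your proposal is correct and follows the paper's overall skeleton (robustness via the final phase; consistency via the $w_k\lessgtr\tfrac12 w_1$ split, with Case~2 reusing Lemma~\ref{lemma:proof_case_2} verbatim and Case~1 further split on $l=1$ versus $l\ge2$). There are, however, two places where your route genuinely differs from the paper's.

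For robustness, the paper separates $\hat c\le w_1$ and $\hat c> w_1$, obtaining $\tau\ln(1/\tau)$ and $\tau\ln(1/(1-\gamma))$ respectively, and then takes the minimum. You instead isolate a single event (``$w_1$ lands after $1-\gamma$ and the running max before it was set in $[0,\tau]$'') that works uniformly in $\hat c$ and directly yields the weaker of the two bounds. This is slightly cleaner.

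For the $l\ge 2$ sub-case, the paper conditions on the arrival interval of $w_1$ and integrates $\tfrac{\tau}{x}w_1+(1-\tfrac{\tau}{x})\tfrac12 w_1$ on $[\tau,1-\gamma]$, then adds the contribution of the event ``$w_2\in[\tau,1-\gamma]$'' on $[1-\gamma,1]$. You replace this by the pointwise inequality $\ALG\ge\tfrac12 w_1\bigl(\mathbbm{1}[\ALG\ge\tfrac12 w_1]+\mathbbm{1}[\ALG=w_1]\bigr)$ and bound the two probabilities separately: $\Pr{\ALG=w_1}\ge\tau\ln(1/\tau)$ via the containment~\eqref{Observation:Select_i_larger_than_gap}, and $\Pr{\ALG\ge\tfrac12 w_1}\ge\Pr{A}+\Pr{B}$ with your Event~$A$ (heaviest of $w_1,\dots,w_l$ in $[0,1-\gamma]$ lands in $(\tau,1-\gamma)$, probability $\tfrac{1-\tau-\gamma}{1-\gamma}(1-\gamma^l)\ge(1+\gamma)(1-\tau-\gamma)$) and the robustness event~$B$. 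Both routes land exactly on $\alpha_2$; yours trades an integral for a combinatorial disjointness argument.

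One small caution: your justification for $A\cap B=\emptyset$ (``$B$ forces no high element in $(\tau,1-\gamma)$'') is not literally true --- $B$ allows a high element there provided an even heavier one sits in $[0,\tau]$. Disjointness still holds, but for the right reason: on $B$ the heaviest element of $[0,t_1)\supset[0,1-\gamma]$ lies in $[0,\tau]$ and (being $\ge$ any $w_j$ with $j\le l$ in $[0,1-\gamma]$) must itself have index $\le l$, so the heaviest of $\{w_1,\dots,w_l\}\cap[0,1-\gamma]$ lies in $[0,\tau]$, contradicting~$A$. With that fix (and the tie-breaking remark you already flag), the argument is complete.
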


Observe that if we do not trust the prediction at all, we could set $\tau = \nicefrac{1}{\e}$ and $1-\gamma = \nicefrac{1}{\e}$. Doing so, we do not use the prediction in our algorithm. Still, for this choices, we recover the guarantee from classical secretary of $\nicefrac{1}{\e}$. 
In other words, we can interpret $\gamma$ as a trust parameter for the prediction which also mirrors our risk appetite. If we do not trust the prediction at all or if we are highly risk averse, we can set $1 - \gamma \approx \tau$. If we are willing to suffer a lot in case of an inaccurate prediction (or if we have high trust in the prediction), we will set $\gamma \approx 0$.

\Cref{thm:RC} yields a trade-off between robustness and consistency. In particular, for a fixed level of robustness, we can choose the optimal values for $\tau$ and $1-\gamma$ for the bounds in \Cref{thm:RC} to obtain the plot in \Cref{fig:optimal_RC_tradeoff_plot}. Observe that when not focusing on robustness (i.e. choosing robustness being equal to zero), we can achieve a consistency approximately matching the upper bound of $0.5736$ described in \Cref{Subsection:related_work}. 

\begin{figure}[h]
	\centering
	\includegraphics[width=0.55\textwidth]{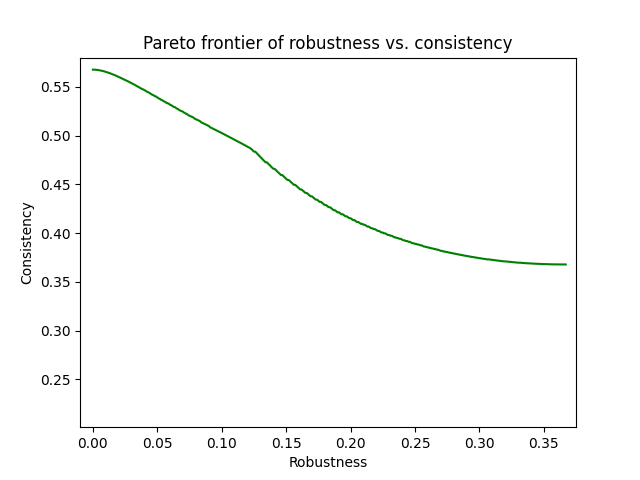}
	\caption{Choosing the optimal parameters $\tau$ and $1- \gamma$ for our analysis in \Cref{thm:RC}: For a given level of robustness, what is the best consistency we can obtain with our analysis.}
	\label{fig:optimal_RC_tradeoff_plot}
\end{figure}

Next, we focus on the proof of \Cref{thm:RC} which we split into two parts: first, we argue about the consistency of our algorithm, second, we show that it is also robust. Concerning robustness, we can only obtain a reasonable contribution by accepting the best weight. Therefore, we derive a lower bound on the probability of accepting the highest weight via \Cref{alg:RC}. Concerning consistency, we can perform a case distinction whether $w_k$ is small or large. Crucially, one is required to take the drop in the threshold after time $1- \gamma$ into account. \\ 

\begin{proof}[Proof of \Cref{thm:RC} (i).]
	For consistency, we assume that our prediction error is zero, hence our predicted gap $\hat{c}_k$ equals the actual gap $c_k$. We can perform the same case distinction as in the proof of \Cref{Theorem:general_gap}. 
	
	\paragraph{Case 1: $w_k < \frac{1}{2} w_1$. }
	
	In the first case, let $w_k < \frac{1}{2} w_1$, hence, \[  \hat{c}_k = c_k = w_1 - w_k > \frac{1}{2} w_1 \geq w_k \geq w_{k+1} \geq \dots \geq w_n \enspace. \] 
	As in the proof of \Cref{lemma:proof_case_1}, we define $l$ to be the index of the element satisfying $w_l \geq c_k > w_{l+1}$, i.e. the last element which is not excluded from a possible acceptance by $c_k$ between time $\tau$ and $1-\gamma$. Note that for any $i \leq l$, we have $w_i \geq w_l \geq \frac{1}{2} w_1$.
	
	Now, we rephrase the expected weight obtained by \Cref{alg:RC} via the expected weight conditioned on the arrival of $w_1$ via
	\begin{align}\label{eqn:split_rc_alg}
		&\Ex{\ALG} \nonumber \\ & = \Pr{w_1 \text{ in } [\tau, 1-\gamma] } \cdot \Ex{\ALG \growingmid w_1 \text{ in } [\tau, 1-\gamma] } + \Pr{w_1 \text{ in } [1-\gamma,1] } \cdot \Ex{\ALG \growingmid w_1 \text{ in } [1-\gamma, 1] } \nonumber \\
		& = (1-\gamma -\tau) \cdot \Ex{\ALG \growingmid w_1 \text{ in } [\tau, 1-\gamma] } + \gamma \cdot \Ex{\ALG \growingmid w_1 \text{ in } [1-\gamma, 1] } \enspace.
	\end{align}
	
	We perform a case distinction if $l=1$ or $l \geq 2$.
	
	If $l = 1$, only $w_1$ exceeds the gap. Hence, we get 
	\begin{align*}
		\Ex{\ALG} & = (1-\gamma -\tau) \cdot \Ex{\ALG \growingmid w_1 \text{ in } [\tau, 1-\gamma] } + \gamma \cdot \Ex{\ALG \growingmid w_1 \text{ in } [1-\gamma, 1] } \\
		& \geq (1-\gamma -\tau) \cdot w_1 + w_1 \cdot \gamma \cdot \int_{1-\gamma}^1 \frac{\tau}{x} \cdot \frac{1}{\gamma} dx \\
		& = w_1 \cdot \left( 1-\gamma -\tau + \tau \cdot \ln \left( \frac{1}{1-\gamma} \right) \right) \eqqcolon w_1 \cdot \alpha_1 \enspace,
	\end{align*}
	where the inequality lower bounds the conditional probability of accepting $w_1$ if we condition on the arrival after time $1- \gamma$. 
	
	If $l \geq 2$, we observe the following. 
	
	In the interval from $\tau$ to $1 - \gamma$, we can pick $w_1$ at some time $x \in [\tau, 1- \gamma]$, if the best up to time $x$ did arrive before $\tau$. Still, in the case where the best up to time $x$ did not arrive before time $\tau$, we are guaranteed to only accept elements which exceed the gap. By the assumption that $c_k > \frac{1}{2} w_1$, we either accept $w_1$ (if the best-so-far is smaller than the gap), or some other element whose value is at least $\frac{1}{2} w_1$. Using a uniform lower bound of $\frac{1}{2} w_1$ in this case, we get  
	\begin{align*}
		&(1-\gamma -\tau) \cdot \Ex{\ALG \growingmid w_1 \text{ in } [\tau, 1-\gamma] } \\ 
		& \geq (1-\gamma-\tau) \cdot \int_{\tau}^{1-\gamma} \frac{1}{1-\tau-\gamma} \cdot \left( \frac{\tau}{x} w_1 + \left(1- \frac{\tau}{x} \right) \frac{1}{2} w_1 \right) dx \\
		& = \frac{1}{2} w_1 \cdot \left( \int_{\tau}^{1-\gamma}  \frac{\tau}{x}  dx + 1 - \tau - \gamma \right) \enspace.
	\end{align*}
	
	For the other term in \Cref{eqn:split_rc_alg}, we use that if $w_1$ arrives after time $1-\gamma$, we will accept a weight of at least $\frac{1}{2} w_1$ if $w_2$ arrives in the interval from $\tau$ to $1-\gamma$. Here it is crucial that $l \geq 2$ since otherwise, the conditioning on the arrival time of $w_1$ would not imply any acceptance in the interval from $\tau$ to $1-\gamma$. Therefore,
	\begin{align*}
		& \gamma \cdot \Ex{\ALG \growingmid w_1 \text{ in } [1-\gamma, 1] } \\ 
		& \geq \gamma \cdot \int_{1-\gamma}^1 \frac{1}{1-(1-\gamma)} \cdot \left( \frac{\tau}{x} w_1 + \frac{1}{2} w_1 \cdot (1-\tau-\gamma) \right) dx \\
		& = \int_{1-\gamma}^1 \frac{\tau}{x} w_1 + \frac{1}{2} w_1 \cdot (1-\tau-\gamma) dx  \\ 
		& = \frac{1}{2} w_1 \cdot \gamma (1-\tau - \gamma) + w_1 \cdot  \int_{1-\gamma}^1 \frac{\tau}{x} dx \enspace,
	\end{align*}
	where the term $(1-\gamma - \tau)$ after the first inequality is the probability of $w_2 $ arriving between $\tau$ and $1-\gamma$. 
	
	Combining this with the lower bound for the first term of \Cref{eqn:split_rc_alg}, we get that for $l \geq 2$, we have
	\begin{align*}
		\Ex{\ALG} & \geq \frac{1}{2} w_1 \cdot \left( \int_{\tau}^{1-\gamma}  \frac{\tau}{x}  dx + 1 - \tau - \gamma \right) +   \frac{1}{2} w_1 \cdot \gamma (1-\tau - \gamma) + w_1 \cdot  \int_{1-\gamma}^1 \frac{\tau}{x} dx \\
		& = \frac{1}{2} w_1 \cdot \int_{\tau}^{1-\gamma}  \frac{\tau}{x}  dx  + \frac{1}{2} w_1 \cdot  (1-\tau-\gamma)+ \frac{1}{2} w_1 \cdot \gamma \cdot  (1-\tau-\gamma) +  w_1 \cdot  \int_{1-\gamma}^1 \frac{\tau}{x} dx \\
		& = \frac{1}{2} w_1 \cdot \left( (1+\gamma)(1-\tau -\gamma) + \int_\tau^1 \frac{\tau}{x} dx  +  \int_{1-\gamma}^1 \frac{\tau}{x} dx \right) \\
		& = \frac{1}{2} w_1 \cdot \left( (1+\gamma)(1-\tau -\gamma) + \tau \cdot \ln\left( \frac{1}{\tau} \right)  +  \tau \cdot \ln\left( \frac{1}{1-\gamma} \right)  \right) \enspace.
	\end{align*}
	Let us define the factor in front of $w_1$ as $$ \alpha_2 \coloneqq \frac{1}{2} \left( (1+\gamma)(1-\tau -\gamma) + \tau \cdot \ln\left( \frac{1}{\tau} \right)  +  \tau \cdot \ln\left( \frac{1}{1-\gamma} \right)  \right) \enspace. $$
	
	Therefore, in case of $w_k < \frac{1}{2} w_1$, we overall get $$ \Ex{\ALG} \geq \min (\alpha_1 , \alpha_2) \cdot w_1 \enspace. $$
	
	\paragraph{Case 2: $w_k \geq \frac{1}{2} w_1$.}
	In the second case, let $w_k \geq \frac{1}{2} w_1$, hence, \[  \hat{c}_k = c_k = w_1 - w_k \leq \frac{1}{2} w_1 \leq w_k \enspace. \] Interestingly, the analysis of \Cref{Algorithm:SAG_general} directly carries over in this case. Recall in this case, we only relied on the gap not excluding high weight elements (with indices $1,\dots,k$), and dropping the gap as a threshold after time $1 - \gamma$ all together preserves this property. When following the proof of \Cref{lemma:proof_case_2} step by step, we can use exactly the same arguments also for \Cref{alg:RC}.
	Hence, we get the same bounds as in \Cref{lemma:proof_case_2} (i) and (ii). 
	
	\paragraph{Combination.} 
	Therefore, defining $\alpha_3 \coloneqq \frac{k+1}{2k} \left( 1 - \tau -(1-\tau)^{k+1}  \right)$ and $\alpha_4 \coloneqq  \frac{3}{2} \tau \ln\left( \frac{1}{\tau} \right) - \frac{1}{2} \tau(1-\tau) $, we obtain  that \[ \Ex[]{\ALG} \geq \alpha \cdot w_1 \] for $\alpha = \min \left( \min \left( \alpha_1, \alpha_2 \right) , \max \left( \alpha_3 , \alpha_4 \right) \right) $.  
\end{proof}

Having shown that our algorithm fulfills the desired consistency properties, we can now shift our perspective towards robustness.

\begin{proof}[Proof of \Cref{thm:RC} (ii).]
	For robustness, we need to protect our algorithm against inaccurate gaps -- no matter how bad the predicted gap is. In particular, once we have access to highly inaccurate gaps, we do not have any information how valuable $w_2, w_3,\dots$ are compared to $w_1$. As a consequence, we bound the expected weight achieved by our algorithm via $$ \Ex{\ALG} \geq \Pr{\text{Select } w_1} \cdot w_1 $$ and aim to find a suitable lower bound on the probability term.
	
	To this end, observe that if the predicted gap is smaller than $w_1$, i.e.\ $\hat{c} \leq w_1$, we can bound the probability term as follows. Denote by $x$ the arrival time of $w_1$. We will always pick $w_1$ if the best element which did arrive before $x$ did arrive before time $\tau$. This happens with probability $\nicefrac{\tau}{x}$, similarly to the proof of \Cref{lemma:proof_case_2}. Hence, if the predicted gap is smaller than $w_1$, we have $$ \Pr[]{\textnormal{Algorithm~\ref{alg:RC} selects } w_1} \geq \int_\tau^1 \frac{\tau}{x} dx = \tau \cdot \ln \left( \frac{1}{\tau} \right) \enspace.$$ 
	
	If the predicted gap is larger than $w_1$, \Cref{alg:RC} will correctly pick $w_1$ in the following case. Again, let $x$ be the arrival time of $w_1$. For any $x$ before time $1- \gamma$, we will not select $w_1$ due to the gap overshooting $w_1$. For $x$ after time $1- \gamma$, we pick $w_1$ at least in the following case: The best element up to time $x$ did arrive before time $\tau$ and hence contributes to the $\BSF(\tau)$-term.
	
	As a consequence, for $\hat{c} > w_1$, we have $$ \Pr[]{\textnormal{Algorithm~\ref{alg:RC} selects } w_1} \geq \int_{1-\gamma}^1 \frac{\tau}{x} dx = \tau \cdot \ln \left( \frac{1}{1-\gamma} \right) \enspace.$$
	
	Therefore, \Cref{alg:RC} will pick $w_1$ with probability at least $\tau \cdot \min \left( \ln \left( \nicefrac{1}{\tau} \right) , \ln \left( \nicefrac{1}{1-\gamma} \right) \right)$. As the function $$ z \mapsto \ln (\nicefrac{1}{z})  $$ is monotonically decreasing and by definition of $\tau$ and $\gamma$, we have $\tau \leq 1-\gamma$, the minimum is attained at $\ln \left( \frac{1}{1-\gamma} \right)$.
	
	As a consequence, \Cref{alg:RC} is $\left( \tau \cdot \ln \left( \frac{1}{1-\gamma} \right) \right)$-robust.
\end{proof}

For example, when using a waiting time $\tau = 0.2$ as in \Cref{corollary:unknown_k} independent of the index $k$ and a value of $\gamma = 0.6$, i.e.\ $1-\gamma = 0.4$, we get the following: \Cref{alg:RC} is approximately $0.383$-consistent and $0.183$-robust (also see \Cref{fig:RC_plot}).

\begin{figure}[h]
	\centering
	\includegraphics[width=0.55\textwidth]{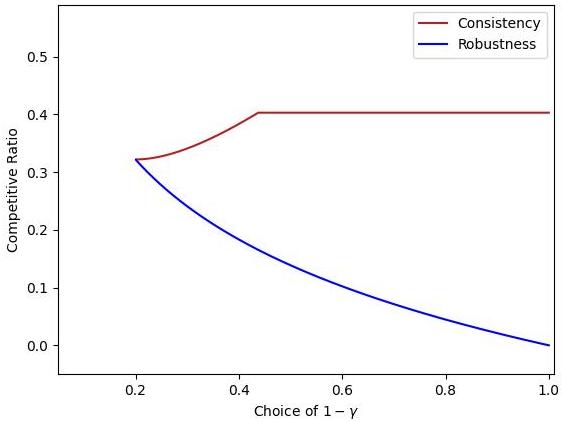}
	\caption{Trade-off between robustness and consistency as a function of the time $1-\gamma$ for fixed choice of $\tau = 0.2$.}
	\label{fig:RC_plot}
\end{figure}

In particular, we can outperform the prevalent bound of $\nicefrac{1}{\e}$ by a constant if the predicted gap is accurate while ensuring to be constant competitive even if our predicted gap is horribly off. Of course, when being more risk averse, one could also increase the robustness guarantee for the cost of decreasing the competitive ratio for consistent predictions.\footnote{We can also slightly improve the trade-off by allowing randomization: Flip a (biased) coin and either choose our algorithm from \Cref{Section:Generalgap} or the classical secretary algorithm which achieves a guarantee of $\nicefrac{1}{\e}$. Our deterministic approach is approximately as good as this randomized variant if we use an unbiased coin and even better compared to biasing the coin towards the algorithm using the prediction. When using a bias towards the classical algorithm, the randomized algorithm has a better trade-off than our deterministic approach.}

We highlight that these guarantees as well as \Cref{thm:RC} hold independent of any bounds on the error of the predicted gap. However, it is reasonable to assume that we have some bounds on how inaccurate our predicted gap is (for example, if our predicted gap is learned from independent random samples). We show in \Cref{Section:Approxgap} that we can achieve much better competitive ratios when we know a range for the error. 
    \section{Improved Guarantees for Bounded Errors}
\label{Section:Approxgap}

Complementing the previous sections where we had either access to the exact gap (\Cref{Section:Generalgap}) or no information on a possible error in the prediction (\Cref{section:robustness-consistency}), we now assume that the error is bounded\footnote{In order to distinguish a bounded error from a possibly unbounded one, we use $\widetilde{c}$ instead of $\hat{c}$ for the predicted gap in this section.}. That is, we get to know some $\widetilde{c}_k \in [c_k - \epsilon; c_k + \epsilon]$ which is ensured to be at most an $\epsilon$ off. Also, the bound $\epsilon$ on the error is revealed to us. Still, the true gap $c_k$ remains unknown.

Our algorithm follows the template which we discussed before. Still, we slightly perturb $\widetilde{c}_k$ to ensure that the threshold is not exceeding $w_1$. 
This algorithm allows to state an approximate version of Theorem~\ref{Theorem:general_gap} for the same lower bounds of $\alpha$ as in the exact gap case. 
\begin{algorithm}[h]
\caption{Secretary with Bounded Prediction Error}
\label{Algorithm:SAG_approx}
\begin{algorithmic}
    \STATE \textbf{Input:} Approximate gap $\widetilde{c}$, time $\tau \in [0,1]$, error bound $\epsilon$ \\
    \STATE Before time $\tau$: \hspace*{3mm} Observe weights $w_i$ \\
    \STATE At time $\tau$: \hspace*{9.5mm} Compute $\BSF(\tau) = \max_{i : t_i \leq \tau} w_i$ \\
    \STATE After time $\tau$: \hspace*{5mm} Accept first element with $w_i \geq \max( \BSF(\tau), \widetilde{c} - \epsilon )$
\end{algorithmic}
\end{algorithm}	

\begin{theorem} \label{Theorem:approx_gap}
    Given any prediction of the gap $\widetilde{c}_k \in [c_k - \epsilon; c_k + \epsilon]$, where $c_k = w_1 - w_k$, Algorithm~\ref{Algorithm:SAG_approx} satisfies $\Ex[]{\ALG}\geq \alpha  \cdot w_1 - 2\epsilon $. 
    For $\tau = 1 - \left(\frac{1}{k+1}\right)^{1/k}$, $\alpha \geq \max \left( 0.4, \frac{1}{2} \left(\frac{1}{k+1}\right)^{1/k}  \right)$ and for $\tau = 0.2$, $\alpha \geq 0.4$.
\end{theorem}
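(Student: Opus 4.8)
The plan is to show that Algorithm~\ref{Algorithm:SAG_approx} behaves like Algorithm~\ref{Algorithm:SAG_general} run with a gap that \emph{underestimates} $c_k$ by at most $2\epsilon$, and then to re-run the proof of Theorem~\ref{Theorem:general_gap} while tracking this additive slack. Write $c' \coloneqq \widetilde c_k - \epsilon$ for the gap term the algorithm actually puts into its threshold. Since $\widetilde c_k \in [c_k-\epsilon, c_k+\epsilon]$ we have $c' \in [c_k - 2\epsilon,\, c_k]$; in particular $c' \le c_k = w_1 - w_k \le w_1$, so the threshold $\max(\BSF(\tau), c')$ never overshoots $w_1$ (this is exactly the reason for subtracting $\epsilon$), while on the other side $c' \ge c_k - 2\epsilon$. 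The proof proceeds by the same case distinction as Theorem~\ref{Theorem:general_gap}: $w_k < \tfrac12 w_1$ versus $w_k \ge \tfrac12 w_1$.

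The regime $w_k \ge \tfrac12 w_1$ is immediate. Here $c' \le c_k \le \tfrac12 w_1 \le w_k$, so $c'$ excludes none of $w_1,\dots,w_k$ from acceptance, and any accepted element among $w_2,\dots,w_k$ still has weight at least $\tfrac12 w_1$ by the case assumption alone — the gap plays no role in that bound. Hence both arguments in the proof of Lemma~\ref{lemma:proof_case_2} go through verbatim, giving $\Ex[]{\ALG}\ge \max(\alpha_3,\alpha_4)\cdot w_1$ with \emph{no} $\epsilon$-loss, where $\alpha_3,\alpha_4$ are the two bounds of Lemma~\ref{lemma:proof_case_2}.

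The regime $w_k < \tfrac12 w_1$ is where the error enters and requires a sub-split on $l' \coloneqq \max\{i : w_i \ge c'\}$, the last element not excluded by $c'$. If $l' \le k-1$, I mirror the proof of Lemma~\ref{lemma:proof_case_1}: conditioned on $w_1$ arriving after $\tau$, the algorithm accepts one of $w_1,\dots,w_{l'}$, which is $w_1$ with probability at least $\tfrac1{l'}$ and otherwise has weight at least $c' \ge c_k - 2\epsilon > \tfrac12 w_1 - 2\epsilon$, so $\Ex[]{\ALG}\ge (1-\tau)\bigl(\tfrac1{l'}w_1 + \tfrac{l'-1}{l'}(\tfrac12 w_1 - 2\epsilon)\bigr) \ge (1-\tau)\tfrac{k}{2(k-1)}\,w_1 - 2\epsilon$, using $l'\le k-1$ (so $\tfrac{l'+1}{2l'}\ge\tfrac{k}{2(k-1)}$) and $(1-\tau)\tfrac{l'-1}{l'}\le 1$. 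If instead $l' \ge k$, then $w_k \ge w_{l'}\ge c' \ge c_k - 2\epsilon = w_1 - w_k - 2\epsilon$, which forces $w_k \ge \tfrac12 w_1 - \epsilon$; thus all of $w_1,\dots,w_k$ have weight at least $\tfrac12 w_1 - \epsilon$ and none is excluded by $c'$ (as $w_k \ge c'$). This is precisely the setting of Lemma~\ref{lemma:proof_case_2} with $\tfrac12 w_1$ replaced by $\tfrac12 w_1 - \epsilon$; re-running that proof and bounding the benign factors multiplying $\epsilon$ (namely $1-\tau-(1-\tau)^{k+1}\le 1$ and $\tau\ln(1/\tau)-\tau(1-\tau)\le 1$) yields $\Ex[]{\ALG}\ge \max(\alpha_3,\alpha_4)\cdot w_1 - \epsilon$. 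Whenever a factored quantity such as $\tfrac{k+1}{2k}w_1 - \epsilon$ would be negative we have $2\epsilon > w_1$, so the target bound $\alpha w_1 - 2\epsilon$ is already negative and there is nothing to prove.

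Combining the cases gives $\Ex[]{\ALG}\ge \min\!\bigl((1-\tau)\tfrac{k}{2(k-1)},\ \max(\alpha_3,\alpha_4)\bigr)\cdot w_1 - 2\epsilon = \alpha\cdot w_1 - 2\epsilon$ with exactly the $\alpha$ of Expression~\eqref{inequality:comp_ratio}; hence the numerical bounds $\alpha\ge\max\!\bigl(0.4,\ \tfrac12(\tfrac1{k+1})^{1/k}\bigr)$ for $\tau = 1-(\tfrac1{k+1})^{1/k}$ and $\alpha\ge 0.4$ for $\tau = 0.2$ are literally the computations already carried out in the proofs of Theorem~\ref{Theorem:general_gap} and Corollary~\ref{corollary:unknown_k}. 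I expect the only genuine subtlety to be the sub-case $l' \ge k$ of the small-$w_k$ regime: one must observe that an underestimated gap which fails to exclude $w_k$ certifies $w_k \ge \tfrac12 w_1 - \epsilon$, so the ``gap is large, few elements are eligible'' analysis can be swapped for the ``the top $k$ are all valuable'' analysis at a cost of only $\epsilon$; everything else is routine bookkeeping of additive error terms.
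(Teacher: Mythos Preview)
Your proof is correct, but your case structure differs from the paper's in an instructive way. You keep the split at $w_k \lessgtr \tfrac12 w_1$ and then, in the small-$w_k$ regime, sub-split on whether the perturbed gap $c'=\widetilde c_k-\epsilon$ actually excludes $w_k$ (your $l'\le k-1$ versus $l'\ge k$). The paper instead \emph{shifts the case threshold} to $w_k \lessgtr \tfrac12 w_1 - 2\epsilon$. With that shift, their Case~1 yields $c' \ge c_k - 2\epsilon > \tfrac12 w_1$ (not $\tfrac12 w_1 - 2\epsilon$), so the Lemma~\ref{lemma:proof_case_1} bound is recovered with \emph{no} $\epsilon$-loss; all the slack is pushed into Case~2, where $w_k\ge\tfrac12 w_1 - 2\epsilon$ and both Lemma~\ref{lemma:proof_case_2} bounds are redone with that weaker lower bound, picking up the $-2\epsilon$ there.

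The trade-off between the two decompositions is essentially cosmetic. The paper's shifted threshold avoids your $l'$-based sub-split, but in exchange it must handle an extra sub-case in Bound~(ii): because their Case~2 allows $w_2$ to lie below $c'$ (e.g.\ $w_2=w_k=\tfrac12 w_1 - 2\epsilon$ while $c'=c_k=\tfrac12 w_1 + 2\epsilon$), they separately argue that when $w_2<c'$ only $w_1$ can be accepted, giving $(1-\tau)w_1$, which dominates the target. Your structure sidesteps this because whenever you invoke Lemma~\ref{lemma:proof_case_2} you have already secured $w_2\ge w_k\ge c'$. Either way the final expression is the same $\alpha$ from~\eqref{inequality:comp_ratio} minus $2\epsilon$, and the numerical conclusions for the two choices of $\tau$ are then literally those of Theorem~\ref{Theorem:general_gap} and Corollary~\ref{corollary:unknown_k}.
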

As a consequence, the guarantees from the exact gap case in \Cref{Section:Generalgap} carry over with an additional loss of $2 \epsilon$. Also, the results when not knowing the index $k$ carry over. In particular, this nicely complements the robustness result from \Cref{thm:RC} as follows: Once we can bound the error in a reasonable range, even not knowing the gap exactly does not cause too much of an issue. The proof of \Cref{Theorem:approx_gap} is a straightforward generalization of the proof from \Cref{Theorem:general_gap} and can be found in \Cref{appendix:proof_approx_gap}.

    \section{Simulations}
\label{Section:Simulations}

In order to gain a more fine-grained understanding of the underlying habits, we run experiments\footnote{All experiments were implemented in Python 3.9 and executed on a machine with Apple M1 and 8 GB Memory. } with simulated weights and compare our algorithms among each other and to the classical secretary algorithm\footnote{As the piece of information we use as a prediction is fairly different to the pieces which were used in the literature before, we will not compare our algorithms to other algorithms from the secretary problem with predictions literature.}.

In \Cref{subsection:experiment_gap}, we compare our \Cref{Algorithm:SAG_general} to the classical secretary algorithm. To this end, we draw weights i.i.d. from distributions and execute our algorithm and the classical one. 
As it will turn out, instances which are hard in the normal secretary setting (i.e. when not knowing any additive gap) become significantly easier with additive gap; we can select the best candidate with a much higher probability. We also demonstrate that for some instances, knowing the gap has a smaller impact, though our \Cref{Algorithm:SAG_general} still outperforms the classical one.

Second, in \Cref{subsection:experiment_RC}, we turn towards inaccurate gaps and compare \Cref{Algorithm:SAG_general} developed in \Cref{Section:Generalgap} to the robust and consistent variant of \Cref{alg:RC} from \Cref{section:robustness-consistency}. As a matter of fact, we will see that underestimating the exact gap is not as much of an issue as an overestimation. In particular, underestimating the gap implies a smooth decay in the competitive ratio while overestimating can immediately lead to a huge drop. 

\subsection{The Impact of Knowing the Gap}
\label{subsection:experiment_gap}
We compare our algorithm with additive gap to the classical secretary algorithm (see e.g. \citep{Dyn63}) with a waiting time of $\nicefrac{1}{\e}$. 

\subsubsection{Experimental Setup}

We run the comparison on three different classes of instances:
\begin{itemize}
    \item[(i)] \label{experiment:setup1} \emph{Pareto}: We first draw some $\theta \sim \textnormal{Pareto}(\nicefrac{5}{n}, 1)$. Afterwards, each weight $w_i$ is determined as follows: Draw $Y_i \sim \textnormal{Unif}[0,\theta]$ i.i.d.\ and set $w_i = Y_i^{ \left(n^{1.5} \right)}$ (for more details on Pareto distributions and secretary problems, see e.g. \citet{10.2307/2245639}).
    \item[(ii)] \emph{Exponential}: Here, all $w_i \sim \textnormal{Exp}(1)$. 
    \item[(iii)] \emph{Chi-Squared}: Draw $w_i \sim \chi^2(10)$. That is, each $w_i$ is drawn from a chi-squared distribution which sums over ten squared i.i.d. standard normal random variables.
\end{itemize}
For each class of instances, we average over $5000$ iterations. In each iteration, we draw $n = 200$ weights i.i.d. from the respective distribution together with $200$ arrival times which are drawn i.i.d.\ from $\textnormal{Unif}[0,1]$.
The benchmark is the classical secretary algorithm with a waiting time of $\tau = \nicefrac{1}{\e}$: Set the largest weight up to time $\tau$ as a threshold and accepts the first element afterwards exceeding this threshold. \Cref{Algorithm:SAG_general} is executed with waiting times $\tau = 0.2$ as well as $\tau = 1 - \left(\nicefrac{1}{k+1}\right)^{\nicefrac{1}{k}}$.

\subsubsection{Experimental Results}
\label{sec:simulations_exact}

When weights are sampled based on the procedure explained in (i), we observe an interesting phenomenon (see Figure~\ref{fig:Pareto_plot}). For the classical secretary algorithm, we achieve approximately the tight guarantee of $\nicefrac{1}{\e}$. Our algorithm, however, achieves a competitive ratio of approximately $0.8$ for $\tau = 0.2$. When having a waiting time depending on $k$, we improve the competitive ratio for large $k$ while suffering a worse ratio for small $k$.
\begin{figure}[h]
\centering
    \includegraphics[width=0.55\textwidth]{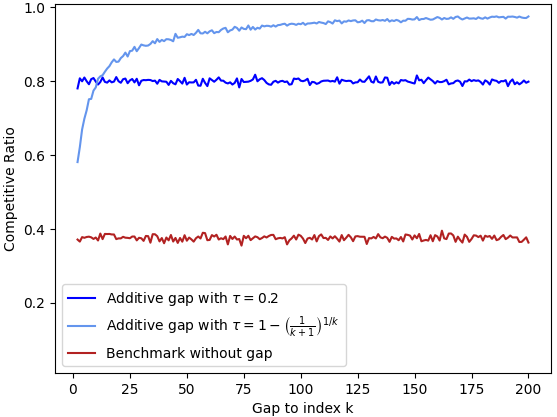}
    \caption{Competitive ratios for weights based on (i). On the $x$-axis, we have the index $k$ from $2$ to $n$. The $y$-axis shows the competitive ratios.}
    \label{fig:Pareto_plot}
\end{figure}
This can be explained as follows. Weights which are distributed according to (i) almost always have a very large gap between the highest and second highest weight. Hence, no matter which gap we observe, it will always be sufficiently large to exclude all elements except the best one. Therefore, we only incur a loss if we do not accept anything (which happens if and only if the best element arrives before the waiting time). As a consequence, for $\tau = 0.2$, we observe the ratio of $0.8$ (which is the probability of the highest weight arriving after time $\tau$). For the waiting times depending on $k$, the waiting time turns out to be larger for smaller $k$ and vice versa. The improvement in the competitive ratio for large $k$ comes from the reduced waiting time and hence a smaller probability of facing an arrival of $w_1$ during the waiting period.

Interestingly, this shows that there are instances for which the classical secretary algorithm almost obtains its tight guarantee of $\nicefrac{1}{\e}$ while these instances become easy when knowing an additive gap. 
As a side remark: One might wonder if it is always true that the index $k$ does not play a pivotal role when using a constant waiting time $\tau = 0.2$. Given the plots below, this is not the case for exponentially distributed weights as in (ii) or Chi-Squared distributed ones as in (iii).

For exponentially distributed weights as in (ii), one can see that even with a static waiting time $\tau = 0.2$, larger indices (and hence automatically larger gaps) are helpful to boost the competitive ratio (see Figure~\ref{fig:Exp_plot}).

\begin{figure}[h]
	\centering
	\includegraphics[width=0.55\textwidth]{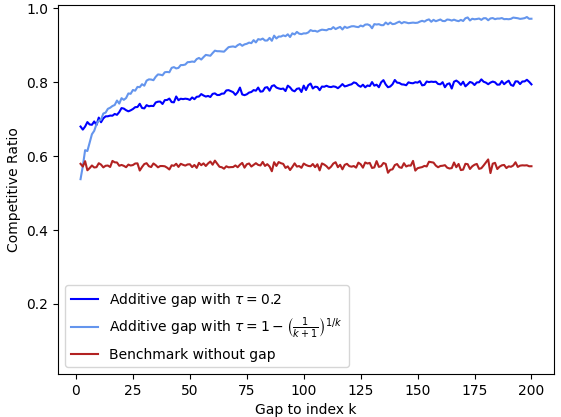}
	\caption{Competitive ratios for weights based on (ii)}
	\label{fig:Exp_plot}
\end{figure}

In addition, for $k = 2$, the waiting time which depends on $k$ does even worse than the classical algorithm. This phenomenon can also be observed for the weights produces by procedure (iii) (see  Figure~\ref{fig:ChiSquared_plot}). It seems that the waiting time for $k=2$ of $1 - 1/\sqrt{3} \approx 0.423$ is simply too large and suffers from losing too much during the exploration phase. Still, also for weights from a Chi-Squared distribution, we can observer that first, knowing gaps helps, and second, larger gaps outperform smaller ones. 

\begin{figure}[h]
	\centering
	\includegraphics[width=0.55\textwidth]{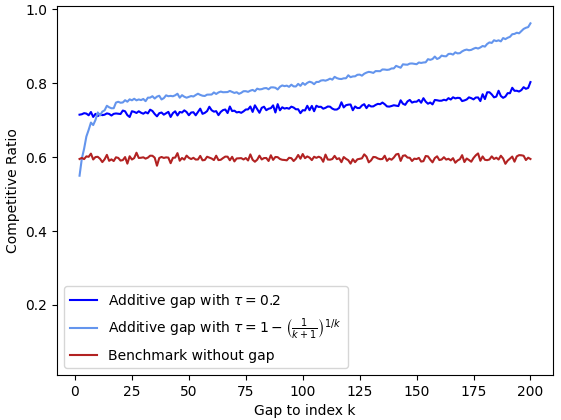}
	\caption{Competitive ratios for weights based on (iii)}
	\label{fig:ChiSquared_plot}
\end{figure}

Summing up, empowering the algorithm with the additional information of some additive gap helps to improve the competitive ratio. As a downside, it turned out that in a few cases it seems that using the index $k$ to compute the waiting time is not beneficial. Still, the waiting time depending on $k$ seems to be an artifact from our analysis. In particular, the waiting time which we used in the simulations was introduced to give provable guarantees. Hence, for practical purposes, one should use a waiting time of $\tau = \min \left( 0.2 ; 1 - \left(\frac{1}{k+1}\right)^{1/k} \right)$ when having access to the index $k$ in order to avoid the waiting time to be too long.

\subsection{Dealing with Inaccurate Gaps}
\label{subsection:experiment_RC}

In order to get a better understanding concerning inaccuracies in the gap, we run a simulation with different errors.

\subsubsection{Experimental Setup}

Again, we average over $5000$ iterations. In each iteration, we set $n = 200$, draw arrival times as before and weights as follows:

\begin{itemize}
    \item[(iv)] \label{experiment:setup2} \emph{Exponential}: Here, all $w_i \sim \textnormal{Exp}(1)$. 
    \item[(v)] \emph{Exponential with superstar}: Here, $w_i \sim \textnormal{Exp}(1)$ for $n-1$ weights and we add a superstar element with weight $100 \cdot \max_{i} w_i $.
\end{itemize}

We compare \Cref{Algorithm:SAG_general} to \Cref{alg:RC} both with waiting time $\tau  = 0.2$. In addition, \Cref{alg:RC} will drop the gap from the threshold after a time of $1 - \gamma = 0.95$, in other words $\gamma = 0.05$. 

The comparison is done for three different gaps: A small one where $k = 2$, i.e.\ the gap between the largest and second largest element, $k = \nicefrac{n}{2}$ and $k=n$, i.e.\ the gap to the smallest element.
Given a multiplication factor $\sigma$ for the error, we feed our algorithm with a predicted gap $\hat{c}_k = \sigma \cdot c_k$ for $\sigma$ going from zero to three in step size of $0.1$. In other words, for $\sigma = 1$, we get an accurate gap, for $\sigma < 1$, we underestimate the gap, for $\sigma > 1 $ we overestimate the gap and for $\sigma = 0$, the algorithms are equivalent to the classical secretary algorithms with waiting time $\tau$.

\subsubsection{Experimental Results}
\label{sec:simulations_RC}

For exponentially distributed weights (see \Cref{fig:exponential_RC}), we can observe that underestimating the gap does not cause too many issues. In particular, when highly underestimating the gap (i.e. $\sigma < 0.5$), both algorithms achieve a competitive ratio of approximately $0.65$, similar to an algorithm not knowing any gap. For an accurate gap, $\sigma = 1$, larger gaps are more helpful as they block more elements from being considered. 
\begin{figure}
\centering
    \includegraphics[width=0.55\textwidth]{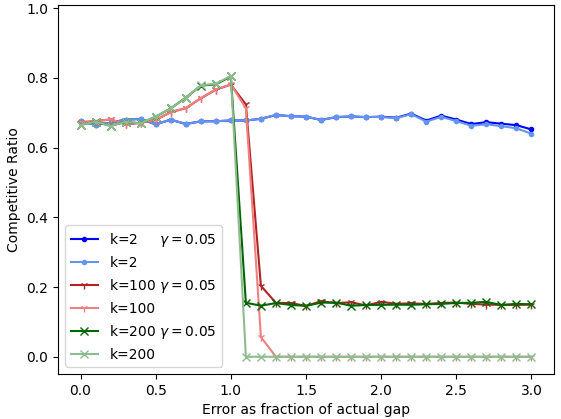}
    \caption{Competitive ratios for weights based on (iv). The $x$-axis shows $\sigma$, where the predicted gap $\hat{c}_k $ used by the algorithms satisfies $\hat{c}_k = \sigma \cdot c_k$ for $\sigma \in [0,3]$.}
    \label{fig:exponential_RC}
\end{figure}
Still, $\sigma > 1$ introduces a transition. For $\sigma > 1$ and gaps between the best and a small element (e.g. $k = 100$ or $k=200$), overestimating the gap reduces the selection probability of \emph{any} weight of \Cref{Algorithm:SAG_general} to zero: The predicted gap is simply too large and even exceeds $w_1$. Still, \Cref{alg:RC} is robust in a sense that we still achieve a competitive ratio of approximately $0.15$. This constant depends on our choice of $\gamma$. As mentioned before, there is the natural trade-off: Increasing $\gamma$ for an improved robustness and suffer a decrease in the competitive ratio for $\sigma =1$. 

Interestingly, for the gap between the best and second best element, both algorithms are much more robust. This can be explained as the gap is small in this case anyway, so overestimating by a factor of three does not cause too much issues yet. One would require to overestimate by a much larger factor here to see a significant difference in the performance of both algorithms.

When considering a more adversarial setting with exponential weights and one superstar element, note that any algorithm can only be constant competitive by selecting the superstar. As illustrated in \Cref{fig:perturbed_exponential_RC}, no matter if we consider the gap to $k=2$, $k=100$ or $k=200$, the gap is always large enough to exclude mainly all elements. 

\begin{figure}[h]
	\centering
	\includegraphics[width=0.55\textwidth]{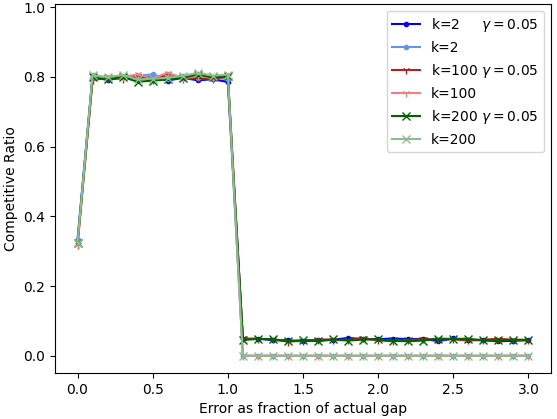}
	\caption{Competitive ratios for weights based on (v). The $x$-axis shows $\sigma$, where the predicted gap $\hat{c}_k = \sigma \cdot c_k$ for $\sigma \in [0,3]$.} 
	\label{fig:perturbed_exponential_RC}
\end{figure}

In addition, even underestimating the gap by a lot (with $\sigma = 0.1$) does not cause any problems. On the other hand, once we overestimate only by a tiny bit, we mainly lose all guarantees and \Cref{Algorithm:SAG_general} becomes not competitive. Our more robust variant in \Cref{alg:RC} achieves a constant competitive ratio which could be increased when choosing larger values of $\gamma$. Again, this would lead to a decrease in the competitive ratio at $\sigma = 1$.

    \section{Conclusion and Future Directions}
\label{Section:conclusion}

As we have seen, a single simple piece of information of the form \qq{There is a gap of $c$ in the instance} helps to improve the competitive ratio for the secretary problem. The reason for this is that we either have a gap which is large, and hence helps to exclude small elements from acceptance. Or, if the gap is small, then we achieve a reasonable fraction of the best weight when accepting the second, third or fourth best. 
In addition, our algorithm can be made robust against inaccurate predictions without sacrificing too much in the competitive ratio. \\

Our results directly impose some open questions for future research.
First, our guarantees seem to be not tight. 
Can we achieve a better competitive ratio for any gap? Or is there a matching hardness result? 
As a second open question, the gaps that we consider are of the form $w_1 - w_k$ for some $k$. As a generalization, one could consider arbitrary gaps $w_i - w_j$ for some $1 \leq i < j \leq n$. Can we do something in this regime? (as sketched in \Cref{appendix:other_gaps}, we can for e.g. $w_2-w_3 = 0$). \\
Also, going beyond the single selection problem is interesting, for example by considering the multi-selection variant. For this, \Cref{section:multi-selection} could be used as a reasonable starting point.

    \paragraph{Acknowledgment}
    The authors would like to thank Thomas Kesselheim for helpful discussions in early stages of this project and the anonymous reviewers for their feedback. This work was done in part while the authors were visiting the Simons Institute for the Theory of Computing for the program on Data-Driven Decision Processes. Alexander Braun has been funded by the Deutsche Forschungsgemeinschaft (DFG, German Research Foundation), Project No. 437739576.

    \newpage    
	\bibliography{references}
    \newpage
    \appendix
    \section{Proof for Gaps with Bounded Error in Theorem~\ref{Theorem:approx_gap}}
\label{appendix:proof_approx_gap}

We give a full proof for Theorem~\ref{Theorem:approx_gap} from Section~\ref{Section:Approxgap}.

\begin{proof}[Proof of Theorem~\ref{Theorem:approx_gap}]
First, we argue that the threshold in the algorithm is never too high to avoid acceptance of $w_1$ if $w_1$ arrives after $\tau$. To see this, note that $\widetilde{c}_k - \epsilon \leq (c_k + \epsilon ) - \epsilon = c_k = w_1 - w_k \leq w_1$. 

For the case distinction, we consider the cases that $w_k$ is small or large with respect to $w_1$. Still, we need incorporate the fact that we know a prediction with bounded error and not the exact gap.

\textbf{Case 1:} $w_k < \frac{1}{2} w_1 - 2\epsilon$.

Observe that in this case, the term $\widetilde{c}_k - \epsilon$ in the threshold is quite large. In particular, 
\begin{align}\label{equation:lower_bound_approx_gap}
    \widetilde{c}_k - \epsilon \geq (c_k -\epsilon) - \epsilon = w_1 - w_k - 2\epsilon > w_1 - (\frac{1}{2} w_1 - 2\epsilon) - 2\epsilon = \frac{1}{2} w_1 \enspace.
\end{align} As mentioned before, $\widetilde{c}_k - \epsilon$ never exceeds $w_1$. Also, observe that $\frac{1}{2} w_1 > \frac{1}{2} w_1 - 2\epsilon > w_k \geq \dots \geq w_n$ by our case distinction.

As a consequence, in this case, the algorithm either selects nothing or some element among $w_1, \dots, w_l$ for some $1 \leq l \leq k-1$. As in the proof in Section~\ref{Section:Generalgap}, we can define $l$ to be the index of element with $w_l \geq \widetilde{c}_k - \epsilon > w_{l+1}$. Also, any element which is selected has a weight of at least $\frac{1}{2} w_1$ by Inequality~\eqref{equation:lower_bound_approx_gap}.

Hence, we achieve the same bound as in the exact gap scenario of
\begin{align*}
    \Ex[]{\ALG} & \geq \Pr[]{\textnormal{Best arrives after } \tau} \cdot \Ex[]{\ALG \growingmid \textnormal{Best arrives after } \tau} \\ 
    & \geq (1- \tau) \left( \frac{1}{2} w_1 \frac{l-1}{l} + \frac{1}{l} w_1 \right) \\ 
    & \geq w_1 (1-\tau) \left( \frac{1}{2} + \frac{1}{2(k-1)} \right) \enspace.
\end{align*}

\textbf{Case 2:} $w_k \geq \frac{1}{2} w_1 - 2\epsilon$.

Observe that selecting any element among $w_2, \dots, w_k$ achieves at least a weight of $\frac{1}{2} w_1 - 2\epsilon$. We now bound the expected weight of the algorithm in a similar way as in Section~\ref{Section:Generalgap} by deriving two lower bounds.

\textit{Bound (i):}

We condition on seeing elements $w_2,\dots, w_{k+1}$ as $\BSF(\tau)$. As before, 

\begin{align*}
    \Ex[]{\ALG} & \geq \sum_{i=2}^{k+1} \Pr[]{w_i \textnormal{ is } \BSF(\tau) } \cdot \Ex[]{ \ALG \growingmid w_i \textnormal{ is } \BSF(\tau) } \\ 
    & = \sum_{i=2}^{k+1} \tau (1-\tau)^{i-1} \cdot \Ex[]{ \ALG \growingmid w_i \textnormal{ is } \BSF(\tau) } \\
    & \geq \sum_{i=2}^{k+1} \tau (1-\tau)^{i-1} \cdot \left( \frac{1}{i-1} w_1 + \frac{i-2}{i-1} \cdot \left( \frac{1}{2} w_1 - 2\epsilon  \right)  \right) \\
    & \geq \left( w_1 \cdot \frac{1}{2} (1+\frac{1}{k}) - 2\epsilon \right) \tau \sum_{i=2}^{k+1} (1-\tau)^{i-1} \\
    & \geq w_1 \cdot \frac{1}{2} (1+\frac{1}{k}) \left( 1 - \tau -(1-\tau)^{k+1}  \right) - 2\epsilon \enspace.
\end{align*}
The only difference to Section~\ref{Section:Generalgap} is the lower bound for $w_2,\dots,w_k$ which are only guaranteed to be at least $\frac{1}{2} w_1 - 2\epsilon$.

\textit{Bound (ii):}

As before, in order to compensate for the weak lower bound in the small $k$ regime, we only consider the probability of selecting the best or second best element. Observe that Inequality~\eqref{Observation:Select_i_larger_than_gap} also holds for \Cref{Algorithm:SAG_approx} holds when replacing the condition on the weights with $w_i \geq \widetilde{c}_k - \epsilon$.

Still, for the following reason, we need to argue in a slightly different way than in the exact gap case. Setting the contribution to the threshold to $\widetilde{c}_k - \epsilon$ ensures that the threshold never overshoots $w_1$. Still, the weight $w_2$ can now fall in two ranges: (a) $w_2 \geq \widetilde{c}_k - \epsilon$ and hence, $w_2$ is not affected by the gap in the threshold or (b) $w_2 < \widetilde{c}_k - \epsilon$ in which case the algorithm does not select $w_2$ as the threshold is too high. The latter case might occur as $w_2 \geq w_k \geq \frac{1}{2}w_1 - 2 \epsilon$, but $\widetilde{c}_k - \epsilon \leq c_k + \epsilon - \epsilon = w_1 - w_k \leq \frac{1}{2} w_1 + 2\epsilon$. Still, this will not introduce any problems.

Concerning (a), we argue as before. The second best element satisfies $w_2 \geq w_k \geq \frac{1}{2} w_1 - 2 \epsilon$ by the case distinction. 
Similar to the proof for the exact gap, we bound the probabilities of selecting $w_1$ or $w_2$ and use the lower bound on $w_2$. This implies that 

\begin{align*}
    \Ex[]{\ALG} & \geq w_1 \tau \ln\left( \frac{1}{\tau} \right) + w_2 \left( \tau \ln\left( \frac{1}{\tau} \right) - \tau(1-\tau) \right) \\ 
    & \geq w_1 \tau \ln\left( \frac{1}{\tau} \right) + \left( \frac{1}{2} w_1 - 2 \epsilon \right) \left( \tau \ln\left( \frac{1}{\tau} \right) - \tau(1-\tau) \right) \\ 
    & \geq w_1 \left( \frac{3}{2} \tau \ln\left( \frac{1}{\tau} \right) - \frac{1}{2} \tau(1-\tau) \right) - 2\epsilon \enspace.
\end{align*}

Concerning (b), note that if $w_2$ is excluded from acceptance by the threshold, so is any $w_i \neq w_1$. Hence, the algorithm will always either select nothing (if $w_1$ appears before $\tau$) or accept $w_1$ (if $w_1$ appears after $\tau$). Here it is important that the contribution of $\widetilde{c}_k - \epsilon$ never exceeds $w_1$. As a consequence, in this case,

\begin{align*}
    \Ex[]{\ALG} = w_1 (1- \tau) \geq w_1 \left( \frac{3}{2} \tau \ln\left( \frac{1}{\tau} \right) - \frac{1}{2} \tau(1-\tau) \right) - 2\epsilon \enspace.
\end{align*}

\textbf{Combination.}

When combining everything, we obtain $\Ex[]{\ALG} \geq \alpha \cdot w_1 - 2 \epsilon$ for 
\begin{align*}
    \alpha \coloneqq \min \left(   (1-\tau)  \frac{k}{2(k-1)}  ;   \max \left( \frac{k+1}{2k} \left( 1 - \tau -(1-\tau)^{k+1}  \right); \frac{3}{2} \tau \ln\left( \frac{1}{\tau} \right) - \frac{1}{2} \tau(1-\tau)  \right) \right)  \enspace,
\end{align*}
which is the same bound as in Expression~\eqref{inequality:comp_ratio} with an additional additive loss of $2\epsilon$.

Plugging in $\tau = 1 - \left(\frac{1}{k+1}\right)^{1/k}$ if knowing the index $k$ or $\tau = 0.2$ when not knowing the gap proves the statement.

\end{proof}

    \newpage
    \section{Informative Examples}

We give some informative examples which should give deeper insights and could partly serve as inspirations and starting points for future research on secretary problems with additive gaps.

\subsection{Knowing that $w_2 - w_3 = 0$}
\label{appendix:other_gaps}

We quickly sketch that also other gaps might be helpful. To this end, assume that $w_1 > w_2 = w_3 > w_4 \geq \dots \geq w_n$. Consider the algorithm that waits for time $\tau$ and accepts the first element after $\tau$ strictly exceeding $\BSF(\tau)$. The algorithm knows $w_2-w_3 = 0$ or equivalently $w_2 = w_3$ and can hence choose $\tau$ accordingly. This allows to compute the following for the probability of selecting the best element.

\begin{align*}
    \Pr[]{\textnormal{Select } w_1} & = \sum_{i=2}^n \Pr[]{w_i \textnormal{ is } \BSF(\tau) } \cdot \Pr[]{\textnormal{Select } w_1 \growingmid w_i \textnormal{ is } \BSF(\tau) } + \Pr[]{\textnormal{No arrival before } \tau} \cdot \frac{1}{n} \\ 
    & = \tau(1-\tau) + \tau(1-\tau)^2 + \sum_{i=4}^n \tau (1-\tau)^{i-1} \frac{1}{i-1} + \frac{1}{n} (1-\tau)^n \\
    & = \frac{1}{2} \tau(1-\tau)^2 + \sum_{i=1}^{n-1} \tau (1-\tau)^{i} \frac{1}{i} + \frac{1}{n}  (1-\tau)^n \\
    & \approx \frac{1}{2} \tau(1-\tau)^2 + \tau \ln \left( \frac{1}{\tau} \right)
\end{align*}

Note that the second equality holds as even if $w_3$ is the best-so-far at time $\tau$, we accept $w_1$ and do not accept $w_2$. The third equality rearranges terms and applies an index shift. Afterwards, we argue that for large $n$, the sum is approximately the expansion of the logarithm. 

Optimizing this, we can set $\tau = 0.359$ to ensure that $\Pr[]{\textnormal{Select } w_1} \geq 0.441 > \frac{1}{\e} $. 

On the other hand, for some gaps, as e.g. $w_{n-1} - w_n$, trying to improve upon the $\nicefrac{1}{\e}$ feels completely hopeless. It would be very interesting to get a more detailed picture here: Which gaps are helpful and how much can we gain? Which gaps are mainly useless? Are there any upper bounds?
    \subsection{Beyond Single Selection}
\label{section:multi-selection}

Instead of having the choice to select a single element, we can shift our perspective and focus on the case where we can select up to $L$ elements. When $L$ turns large, asymptotically we can get a $1 - O (\nicefrac{1}{\sqrt{L}})$ \citep{10.5555/1070432.1070519}. Complementing this, recent work by \citet{DBLP:journals/tcs/AlbersL21} improved the guarantees in the small $L$ regime. We also show that knowing an additive gap can help to beat $\nicefrac{1}{\e}$ for small values of $L$ fairly easily.

Now, the optimum solution will not only pick the largest weight but rather the $L$ largest weights, i.e. \[ \OPT = \sum_{j=1}^L w_j \enspace. \]

In order to come up with a competitive algorithm, we make use of the \emph{virtual algorithm} by \citet{DBLP:conf/approx/BabaioffIKK07} and equip it with the additive gap $c = w_L - w_{L+1}$ between the smallest weight which is included in the optimum solution. 

The algorithm works as follows. We overload notation a bit and denote by $r_L$ an element as well as its weight during the run of the algorithm.

\begin{algorithm}[H]	
\caption{$L$-Selection Secretary with Exact Additive Gap}
\label{Algorithm:Lselection}
\begin{algorithmic}
    \STATE \textbf{Input:} Additive gap $c$, time $\tau \in [0,1]$ \\
    \STATE \hspace*{7mm} \textbf{Before time $\tau$:} \\ Observe weights $w_i$. \\
    \STATE \hspace*{7mm} \textbf{At time $\tau$:} \\ Compute reference set $R(\tau)$ which contains $L$ highest weights seen so far. Denote by $r_L$ the current $L$-th largest element in $R$ (if it exists, otw. $r_L = 0$). \\
    \STATE \hspace*{7mm} \textbf{After time $\tau$:} \\ As $w_i$ arrives, if $w_i \geq \max(r_L,c)$ and $r_L$ did arrive before $\tau$, accept $w_i$, add $w_i$ to $R$, remove $r_L$ from $R$ and update $r_L$.
\end{algorithmic}
\end{algorithm}	

As a consequence, during any point in time, we ensure that $R$ always contains the $L$ highest weights seen so far which exceed the gap. 

Having this, we can state the following theorem.

\begin{theorem}\label{theorem:L_selection}
    Algorithm~\ref{Algorithm:Lselection} guarantees \[  \Ex[]{\ALG} \geq \left( \frac{1}{\e} + \frac{\beta}{2 \e} \left( 1 - \frac{1}{ L} + \frac{1}{L \cdot \e^L}  \right) \right) \cdot \OPT \enspace, \] where $\beta$ is the fraction of $\OPT$ which is covered by $w_L$, i.e. $w_{L} = \beta \cdot \OPT$. 
\end{theorem}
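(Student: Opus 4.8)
The plan is to split $\Ex[]{\ALG}$ into a \emph{base} contribution of $\tfrac1e\OPT$, inherited essentially unchanged from the analysis of the virtual algorithm of \citet{DBLP:conf/approx/BabaioffIKK07}, plus a strictly positive \emph{bonus} of order $\tfrac{w_L}{e}$ that the additive gap buys us; throughout we take $\tau=\tfrac1e$. The first observation is that the gap never blocks a ``winner'': since $c=w_L-w_{L+1}\le w_L\le w_j$ for every $j\in[L]$, the threshold $\max(r_L,c)$ is, as far as the top-$L$ elements are concerned, no more restrictive than $r_L$ alone. Hence every step of the virtual-algorithm analysis that bounds the acceptance of $w_1,\dots,w_L$ carries over verbatim, and with $\tau=\tfrac1e$ each $w_j$, $j\in[L]$, is accepted with probability at least $\tfrac1e$, so that $\Ex[]{\ALG}\ge\sum_{j=1}^L\Pr[]{w_j\text{ accepted}}\,w_j\ge\tfrac1e\sum_{j=1}^L w_j=\tfrac1e\OPT$.

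For the bonus I would do a case distinction on the size of $w_{L+1}$ relative to $w_L$, mirroring the single-selection proof. In Case~A, $w_{L+1}<\tfrac12 w_L$, so $c=w_L-w_{L+1}>\tfrac12 w_L$ and the gap excludes \emph{every} element of rank $>L$ from acceptance; consequently any element the algorithm accepts has weight at least $w_L$, i.e.\ the algorithm never wastes an acceptance on a low-value element and only ever upgrades a slot of $R$ to something worth at least $w_L$. In Case~B, $w_{L+1}\ge\tfrac12 w_L$, so $c\le\tfrac12 w_L\le w_{L+1}$ and $w_{L+1}$ (together with every element down to the first one below $c$) survives the gap while still being worth at least $\tfrac12 w_L$. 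In both cases the ``$\tfrac12$'' in the statement is exactly the guaranteed value $\tfrac12 w_L$ of the element occupying the relevant slot, and the combinatorial factor $1-\tfrac1L+\tfrac1{Le^L}$ (together with the remaining $\tfrac1e$) comes from a geometric-series bookkeeping over which of $w_1,\dots,w_L$ arrive before $\tau=\tfrac1e$ — the same kind of computation as the sums $\sum_{i}(1-\tau)^{i-1}$ appearing in \Cref{lemma:proof_case_2}. Summing the base and the bonus yields $\Ex[]{\ALG}\ge\tfrac1e\OPT+\tfrac{w_L}{2e}\bigl(1-\tfrac1L+\tfrac1{Le^L}\bigr)=\bigl(\tfrac1e+\tfrac{\beta}{2e}(1-\tfrac1L+\tfrac1{Le^L})\bigr)\OPT$, using $w_L=\beta\OPT$.

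The main obstacle is to add the bonus on top of the $\tfrac1e\OPT$ base \emph{without double counting}: the virtual algorithm accepts at most $L$ elements, so there is no literal ``extra'' acceptance, and one must argue that the quantity charged in the bonus step corresponds to value in a slot of $R$ that the base accounting did not already claim — for instance value in \emph{excess} of $w_L$ in the lowest slot, or acceptances realized precisely in those orderings where the per-element bound $\tfrac1e$ is slack. Pinning this down requires tracking how the reference set $R$ evolves after $\tau$ and, crucially, how the gap reshapes it: one has to show that blocking all rank-$>L$ elements (Case~A), respectively the favourable position of $w_{L+1}$ relative to $c$ (Case~B), genuinely raises the occupancy and quality of the last slot compared with the gap-free virtual algorithm, rather than merely re-invoking the $\tfrac1e$ per-element guarantee a second time. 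I expect the rest (the geometric sums, verifying the stated closed form, the $L=1$ consistency check against Algorithm~\ref{Algorithm:SAG_general}) to be routine once this coupling is set up correctly.
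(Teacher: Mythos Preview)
Your proposal has the right case distinction but the wrong architecture, and the obstacle you flag as the heart of the matter is actually a non-issue.

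First, the paper does \emph{not} build the bound as ``base $\tfrac1e\OPT$ plus a bonus that holds in both cases''. It computes a separate lower bound in each case and then takes the \emph{minimum} of the two. In Case~A (large gap, $w_{L+1}<\tfrac12 w_L$) the gap excludes every element of rank $>L$, so each $w_j$ with $j\le L$ is accepted whenever it arrives after $\tau$, giving $\Pr[]{\text{select }w_j}=1-\tau$ and hence $\Ex[]{\ALG}\ge(1-\tfrac1e)\OPT$. There is no separate ``bonus'' to extract here; the whole improvement is that the per-element acceptance probability jumps from $\tfrac1e$ to $1-\tfrac1e$. Your description (``never wastes an acceptance'') points in the right direction but does not translate into the additive bonus you want.

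Second, in Case~B (small gap, $w_{L+1}\ge\tfrac12 w_L$) your worry about double counting is misplaced. The identity $\Ex[]{\ALG}=\sum_j \Pr[]{w_j\text{ accepted}}\,w_j$ holds exactly, with no overlap between terms; the bonus is simply the $(L{+}1)$-st summand. One bounds $\Pr[]{\text{select }w_j}\ge\tau\ln(1/\tau)$ for $j\le L$ by the usual virtual-algorithm argument, and then computes
\[
\Pr[]{\text{select }w_{L+1}}\ \ge\ \int_\tau^1 \bigl(1-x^L\bigr)\frac{\tau}{x}\,dx\ =\ \tau\ln\!\left(\tfrac1\tau\right)-\tfrac{\tau}{L}\bigl(1-\tau^L\bigr),
\]
the factor $1-x^L$ being the probability that at least one of $w_1,\dots,w_L$ arrives after $w_{L+1}$ (so that $w_{L+1}$ is among the $L$ best seen at its arrival). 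Plugging in $\tau=\tfrac1e$ gives exactly $\tfrac1e\bigl(1-\tfrac1L+\tfrac1{Le^L}\bigr)$. This is a continuous-time integral, not a geometric sum over $\BSF$ events as in \Cref{lemma:proof_case_2}; your description of where the combinatorial factor comes from is off.

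Finally, one checks that for every $L\ge2$ and $\beta\in[0,1]$ the Case~B bound is the smaller of the two, so the minimum equals the stated expression. No coupling of reference-set trajectories is needed.
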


\begin{proof}
    As in the single selection case, we make a case distinction whether the $w_{L+1}$ is large or small with respect to $w_{L}$.

    \textbf{Case 1:} $w_{L+1} < \frac{1}{2} w_L$. \\
    In this case, we have that \[ c = w_L - w_{L+1} > w_L - \frac{1}{2} w_L = \frac{1}{2} w_L > w_{L+1} \enspace. \] As a consequence, after time $\tau$, the algorithm will discard any weight smaller than $w_L$ automatically. In addition, any element which is at least $w_L$ will be selected as long as it arrives after the waiting time $\tau$. 

    Hence, for $1 \leq j \leq L$, we get \[ \Pr[]{\text{select } w_j} = 1- \tau \enspace. \]

    \textbf{Case 2:} $w_{L+1} \geq \frac{1}{2} w_L$. \\
    Now, the gap can be fairly small. In particular, we have \[ c = w_L - w_{L+1} \leq w_L - \frac{1}{2} w_L = \frac{1}{2} w_L \leq w_{L+1}  \enspace. \]

    Still, we are ensured that selecting $w_{L+1}$ instead of $w_L$ leads to a sufficient contribution to the total selected weight. 

    To this end, first note that any $1 \leq j \leq L$ is never excluded by the gap. In addition, we select weight $w_j$ in the following cases: Either less than $L$ elements did arrive before $w_j$ or element $r_L$ in set $R$ at the arrival of weight $w_j$ did arrive before the waiting time $\tau$. Having the arrival of weight $w_j$ at time $x$, the probability of $r_L$ arriving before time $\tau$ is $\nicefrac{\tau}{x}$. Therefore, we get   
    \begin{align*}
        \Pr[]{\text{select } w_j } & = \int_\tau^1 \Pr[]{\text{less than } L \text{ arrivals before time } x} + \Pr[]{\text{at least } L \text{ arrivals before time } x} \frac{\tau}{x} \ dx \\ & 
        \geq \int_\tau^1 \frac{\tau}{x} \ dx = \tau \ln\left( \frac{1}{\tau} \right) \enspace.
    \end{align*}

    In addition, note that for some time $x$, we have 
    \begin{align*}
        \Pr[]{\exists i \in [L] : t_i > x} = 1 - \Pr[]{\forall i \in [L] : t_i \leq x } = 1 - x^L
    \end{align*}
    and therefore, we can compute 
    \begin{align*}
        \Pr[]{\text{select } w_{L+1} } & \geq \int_\tau^1 \Pr[]{\exists i \in [L] : t_i > x} \frac{\tau}{x} dx \\ 
        & = \int_\tau^1 \left(1- x^L \right) \frac{\tau}{x} dx \\ 
        & =  \tau \ln\left( \frac{1}{\tau} \right) - \frac{\tau}{L} \left( 1 - \tau^L \right) \enspace.
    \end{align*}
    As a consequence, using that $w_{L} = \beta \cdot \OPT$, we have
    \begin{align*}
        \Ex[]{\ALG} & \geq \tau \ln\left( \frac{1}{\tau} \right) \sum_{j=1}^L w_j + \left( \tau \ln\left( \frac{1}{\tau} \right) - \frac{\tau}{L} \left( 1 - \tau^L \right) \right) \frac{1}{2} w_L \\ 
        & = \left(  \tau \ln\left( \frac{1}{\tau} \right) + \frac{\beta}{2} \left( \tau \ln\left( \frac{1}{\tau} \right) - \frac{\tau}{L} \left( 1 - \tau^L \right) \right) \right) \cdot \OPT
    \end{align*}

    \textbf{Combination:} As a consequence, when combining the two cases, we get the smaller of the two guarantees as our competitive ratio. 
    Setting $\tau = \nicefrac{1}{\e}$, we obtain a guarantee of 
    \begin{align*}
    \min\left(1-\frac{1}{\e} \ ; \ \frac{1}{\e} + \frac{\beta}{2 \e} \left( 1 - \frac{1}{ L} + \frac{1}{L \cdot \e^L}  \right)   \right) \\ 
    = \frac{1}{\e} + \frac{\beta}{2 \e} \left( 1 - \frac{1}{ L} + \frac{1}{L \cdot \e^L}  \right)   \enspace.
    \end{align*}
    The equality holds as for any $L \geq 2$ and $\beta \in [0,1]$, the minimum is obtained by the second expression. 
\end{proof}
\end{document}